\theoremstyle{plain}
\newtheorem{theorem}{Theorem}[section]
\newtheorem{invariant}[theorem]{Invariant}
\newtheorem{corollary}[theorem]{Corollary}
\newtheorem{lemma}[theorem]{Lemma}
\newcommand{\R}{\mathbb{R}}
\newcommand{\reals}{\mathbb{R}}
\newcommand{\eps}{\varepsilon}
\newcommand{\A}{\mathcal{A}}
\newcommand{\F}{\mathcal{F}}
\newcommand{\G}{\mathcal{G}}
\newcommand{\U}{\mathcal{U}}
\newcommand{\C}{\mathcal{C}}
\newcommand{\cL}{\mathcal{L}}
\newcommand{\cU}{\mathcal{U}}
\def\lclenv{\mathsf{L}}
\def\edges{\mathcal{E}}
\def\bd{\partial}
\def\area{\mathop{\mathrm{Area}}}
\def\GG{\mathbb{G}}
\def\ucircle{\mathsf{s}^+}
\title{Maintaining the Union of Unit Discs under 
Insertions with Near-Optimal Overhead\thanks{A preliminary
version appeared as Pankaj K.~Agarwal,
Ravid Cohen, Dan Halperin, and
Wolfgang Mulzer. \emph{Maintaining the Union of Unit Discs under 
Insertions with Near-Optimal Overhead}.
Proceedings of the 35th International Symposium on Computational 
Geometry (SoCG), pp.~26:1--26:15, 2019.
Work by P.A. has been supported by
NSF under grants CCF-15-13816, CCF-15-46392, and IIS-14-08846, by ARO 
grant W911NF-15-1-0408, and by grant 2012/229 from the U.S.-Israel 
Binational Science Foundation. Work by D.H.\ and R.C.\ has been supported 
in part by the Israel Science
Foundation (grant no.~1736/19), 
by NSF/US-Israel-BSF (grant no.~2019754),
by the Israel Ministry of Science and Technology (grant no.~103129), 
by the Blavatnik Computer Science Research Fund, and by
the Yandex Machine Learning Initiative for Machine Learning 
at Tel Aviv University.
Work by 
W.M.~has been partially supported by ERC STG 757609 and GIF grant 1367/2016.

}
}
\author{
  Pankaj K. Agarwal \\
  Department of Computer Science,\\ 
  Duke University, Durham, NC 27708, USA\\ 
  \And
  Ravid Cohen \\
  School of Computer Science,\\
  Tel-Aviv University, Tel-Aviv 69978, Israel
  \And
  Dan Halperin \\
  School of Computer Science,\\
  Tel-Aviv University, Tel-Aviv 69978, Israel
  \And
  Wolfgang Mulzer \\
  Institut f\"ur Informatik,\\ 
  Freie Universit\"at Berlin, D-14195 Berlin, Germany\\
}
\begin{document}
\maketitle

\begin{abstract}
We present efficient dynamic data structures for 
maintaining the union of unit discs and the lower envelope of pseudo-lines in the plane.
More precisely, we present three main results in this paper:
\begin{itemize}
	\item[(i)]  We present a linear-size data structure to maintain the union of a set 
of unit discs under insertions. It can insert a disc and update the union in $O((k+1) \log^2 n)$ time, where $n$ is the
current number of unit discs and $k$ is the combinatorial complexity of the structural
change in the union due to the insertion of the new disc.  It can also compute, within the same time bound, the area of the union after the insertion of each disc.
	\item[(ii)]  We propose a linear-size data structure for maintaining the lower envelope of a set of $x$-monotone pseudo-lines.
		It can handle insertion/deletion of a pseudo-line in $O(\log^2 n)$  time;
		for a query point $x_0\in\reals$, it can report, in $O(\log n)$ time, the point on the lower envelope with $x$-coordinate $x_0$; and 
		for a query point $q\in\R^2$, it can return all $k$ pseudo-lines lying below $q$ in time $O(\log n+k\log^2 n)$.
	\item[(iii)] We present a linear-size data structure for storing a set of circular arcs of unit radius
		(not necessarily on the boundary of the union of the corresponding discs), so that for a query unit disc $D$,
		all input arcs intersecting $D$ can be reported in $O(n^{1/2+\eps} + k)$ time, where $k$ is the output size and 
		$\eps > 0$ is an arbitrarily small constant. A unit-circle arc can be inserted or deleted in $O(\log^2 n)$ time.
\end{itemize}
\end{abstract}

\section{Introduction}

\paragraph{Problem statement.}
Let $S=\{p_1,\ldots,p_n\}$ be set of $n$ points in $\R^2$, let $D(p_i)$ be the unit disc centered at $p_i$, 
and let $U := U(S) := \bigcup_{p\in S} D(p)$ be the union of the unit discs centered at the points of $S$. We wish
to maintain the boundary $\partial U$ of $U$, 
as new points are added to $S$. In particular, we wish to maintain (i) the set of edges on $\bd U$ and (ii) the area of $U$.

The efficient manipulation of collections of unit discs in the plane is a widely and frequently studied
topic, e.g., in the context of sensor networks, where every disc represents the area covered by a sensor.
In our setting, we are motivated by the problem of multiple agents traversing a region in search of a 
particular target~\cite{abs-1904-02895}. We are interested in
investigating the pace of coverage as the agents move, and we wish to estimate at each stage the overall 
area that has been covered so far.  The simulation is discretized, i.e., each agent is modeled by a unit disc whose 
motion is simulated by changing its location at fixed time steps. In other words, we are receiving a stream $\{p_1, p_2, ..., p_i\}$ of 
points in $\R^2$. When the next point $p_{i+1}$ arrives, we want to quickly compute the area of $D(p_{i+1}) \setminus \bigcup_{j\le i} D(p_j)$.

\begin{figure}[ht]
	\centering
    \includegraphics{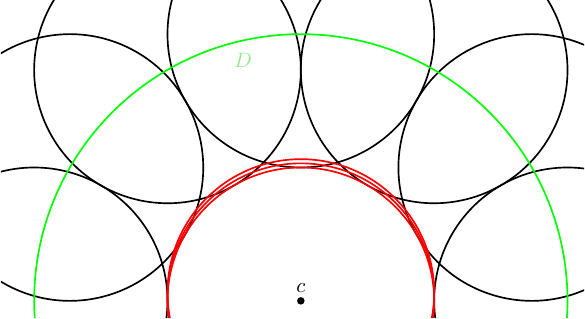}
    \caption{An instance in which the union boundary of a set of
    unit discs in the plane undergoes $\Omega(n^2)$ 
    combinatorial changes during $n$ insertions. The black discs are inserted first, and then red discs are inserted from bottom 
	to top.}
	\label{f:worst_case_example}
\end{figure}

It is known that even for discs of arbitrary radii, the 
boundary $\partial U$ has $O(n)$ vertices and edges~\cite{KedemLPS86}, 
and that $\partial U$ can be computed in $O(n\log n)$ time 
using \emph{power diagrams}~\cite{Aurenhammer1988}. An incremental
algorithm~\cite{spirakis1983very} can maintain $\partial U$
under $n$ insertions in total
time $O(n^2)$.
This is worst-case optimal, as the total amount of structural change to $\partial U$ 
under a sequence of $n$ insertions can be $\Omega(n^2)$ in the worst case. For instance, refer to Figure~\ref{f:worst_case_example}.
    Let $D$ be a disc of radius $2$ centered at the origin (green). We first insert $n/2$ unit discs with equidistant centers on $\partial D$ (black).
    Next, we insert $n/2$ (red) unit discs such that the center of the $i$th red disc is $(0, \eps i)$ on the $y$-axis, for some sufficiently small constant  $\eps > 0$. The insertion of each of the last $n/2$ discs creates $n$ vertices on the 
    union of the discs inserted so far. Our goal is thus to develop an 
 \emph{output-sensitive} algorithm that uses $O(n)$ space and updates $U$ in time proportional to 
the number of changes in vertices and edges of $\partial U$ due to the insertion of a new disc.

Maintaining the edges of $\bd U$ requires answering intersection-searching queries of the
following form: Given a collection $\C$ of unit-radius circular arcs
that comprise $\partial U$ and a query unit disc $D$, report the arcs in $\C$ that intersect $D$. Developing an efficient data structure for this intersection-searching problem led us to study the following problem, which is interesting in its own right:
A set of \emph{pseudo-lines} is a set of
bi-infinite simple curves in the plane such that each pair of  curves
intersect in exactly one point and they cross at that point. 
Given a set $E$ of $x$-monotone pseudo-lines in the plane, we wish to maintain their
lower envelope $\cL(E)$ (see Section~\ref{sec:dynamic_lower_env} below for the definition) under insertions and 
deletions of pseudo-lines, such that for a point $x_0\in\reals$, the point on $\cL(E)$ with $x$-coordinate $x_0$ can be reported 
quickly.

\paragraph{Related work.}
Arrangements of pseudo-lines have been studied extensively in discrete and computational geometry;
see, e.g., the classic monograph by Gr\"unbaum~\cite{Gru} and recent surveys~\cite{fg-pa-18,hs-a-18}
for a review of combinatorial bounds and algorithms involving arrangements of pseudo-lines.
For the case of lines (rather than pseudo-lines), the celebrated result by Overmars and van Leeuwen~\cite{Overmars1981} can maintain the lower envelope in $O(\log^2 n)$ time under insertion and deletion of lines.
This bound has been improved over the last two 
decades~\cite{BrodalJ02,Chan01,HS,BrodalJ00,KaplanTT01}; these
improvements are, however, not directly applicable for pseudo-lines. If only insertions are performed, then the data structure by Preparata~\cite{Pr79} can be extended to maintain the lower envelope of a set of pseudo-lines in $O(\log n)$ time per update.

A series of papers have developed powerful general data structures for maintaining the lower envelopes of a set of curves of bounded description complexity, based on shallow
cuttings~\cite{AgarwalM95,Chan10,KaplanMRSS20,Liu20}.
Many of these data structures also work in $\R^3$.
However, the power of these data structures comes at a significant cost:
the algorithms are quite involved, the performance
guarantees are in the expected and amortized sense, and the
operations have (comparatively) large polylogarithmic
running times. For pseudo-lines, Chan's
method~\cite{Chan10}, with
improvements by Kaplan et
al.~\cite{KaplanMRSS20}, yields
$O(\log^3 n)$ amortized expected insertion time, 
$O(\log^5 n)$ amortized expected deletion time, and 
$O(\log^2 n)$ worst-case query time.  An interesting open question has been whether the Overmars-van-Leeuwen data structure can be extended to maintaining the lower envelope of a set of pseudo-lines.

As mentioned above, a variety of applications have motivated the study of arrangements of unit discs. It is known that the algorithms by  Overmars-van-Leeuwen~\cite{Overmars1981} and by Preparata~\cite{Pr79} for maintaining the intersection of halfplanes can be extended to maintaining the intersection of unit discs within the same time bound. In contrast, maintaining the union of a set of unit discs is more involved and much less is known about this problem. 
The partial rebuilding technique by Bentley and Saxe~\cite{BS80} leads to a linear-size semidynamic data structure for 
maintaining the union of 
unit discs under insertions that can determine in $O(\log^2 n)$ time whether a query point lies in their union; 
a unit disc can be inserted in $O(\log^2 n)$ time. 
Recently de Berg~et~al.~\cite{BBJW21} improved the update and query time to $O(\log n)$.
However, neither of these two approaches can be adapted to maintain the boundary of the union of unit 
discs in output-sensitive manner or to maintain the area of the union under insertion of unit discs. 

Chan~\cite{Chan20a} presented a data structure that can maintain the volume of the convex hull of a set of points in $\reals^3$ in sublinear time. Notwithstanding a close relationship between the union of discs in $\reals^2$ and the convex hull of a point set in $\reals^3$,
it is not clear how to extend his data structure for maintaining the area of the union of unit discs in sublinear time, even if 
we only perform insertions.

We conclude this discussion by noting that there has been extensive work on a variety of intersection-searching problems, in which
we wish to preprocess a set of geometric objects into a data structure so that all objects intersected by a 
query object can be reported efficiently. These data structures typically reduce the problem to simplex or semialgebraic range searching and are based on multi-level partition trees; see, e.g., the recent survey by Agarwal~\cite{a-rs-18} for a review; see 
also~\cite{AvKO,APS,GJS}.

\paragraph{Our results.} This paper contains the following three main results:

\textbf{\textit{Lower envelope of pseudo-lines.}} Our first result is a fully dynamic linear-size data structure for maintaining the lower envelope of a set of $x$-monotone pseudo-lines with $O(\log^2 n)$ 
update time and $O(\log n)$ query time.
Additionally, it can also report all $k$ pseudo-lines
lying below a query point in $O(\log n + k\log^2 n)$ time.
An adaptation of the Overmars-van-Leeuwen data structure~\cite{Overmars1981}, it is more efficient and considerably simpler than the 
existing dynamic data structures for maintaining lower envelopes of pseudo-lines.
The key innovation is a new procedure for finding the
intersection between two lower envelopes of planar pseudo-lines
in $O(\log n)$ time, using \emph{tentative} binary search, where 
each pseudo-line in one envelope is ``smaller'' than every 
pseudo-line in the other envelope, 
in a sense to be made precise below.

\textbf{\textit{Union of unit discs.}} 
Our second result, which is the main result of the paper,
is a linear-size data structure for updating $\bd U$, the boundary of the union of unit discs, in 
$O((k+1)\log^2 n)$ time, per insertion of a disc,  where $k$ is the combinatorial complexity of the structural change to $\partial U$ due to the insertion (see Section~\ref{sec:union_maintain}). 
We use this data structure to compute the change in the area of the union in additional $O((k+1)\log n)$ time, after having computed the changes in $\bd U$.
At the heart of our data structure is a semi-dynamic data structure for reporting all $k$ edges of $\bd U$ that intersect a query unit disc in $O(\log n+k\log^2 n)$ time. Roughly speaking, we draw a uniform grid of diameter $1$. For each grid cell $C$, we 
clip the edges of $\bd U$ within $C$. Let $E_C$ be the set of (clipped) edges of $\bd U$ lying inside $C$. 
For an edge $e\in E_C$, let $K_e$ be the Minkowski sum of $e$ with $D(o)$, where $o$ is the origin, i.e., $K_e$ is the region such that a unit disc $D(q)$ intersects $e$ if and only if $q\in K_e$. The problem of reporting the arcs of $E_C$ intersected by a unit disc 
$D(q)$ is equivalent to reporting the regions of $K = \{ K_e \mid e \in E_C\}$ that contain $q$.
Exploiting the property that the arcs of $E_C$ lie inside a grid cell of diameter~$1$, we show that 
our pseudo-line data structure can be used for reporting the regions of $K$ that contain a query point.

\textbf{\textit{Circular-arc intersection searching.}} 
Our final result is a data structure for the intersection-searching problem 
in which the input objects are arbitrary unit-radius circular arcs rather 
than arcs
 forming the boundary of the union of the unit discs, and the query is a 
 unit disc. 
We present a linear-size data structure with $O(n \log n)$ preprocessing time,  $O(n^{1/2+\eps} + k)$ query time and $O(\log^2 n)$
amortized update time, where $k$ is the size of the output 
and $\eps>0$ is an arbitrarily small, but fixed, constant. This result follows the same approach as earlier data structures for intersection~\cite{AvKO,GJS} and constructs a two-level partition tree. Our main contribution is a simpler characterization of the condition of a unit disc intersecting a unit-radius circular arc.\footnote{We believe the update time can be made worst case by using the 
lazy reconstruction method~\cite{Over83}.}

\paragraph{Road map of the paper.}
The paper is organized as follows: We begin in Section~\ref{sec:dynamic_lower_env}
by describing the dynamic data structure for maintaining the lower envelope of pseudo-lines. Next, we present in Section~\ref{sec:union_maintain} 
the data structure for maintaining the union of unit discs under insertions. Section~\ref{sec:range-search} presents the dynamic data structure for unit-arc intersection searching. Finally, we conclude in Section~\ref{sec:concl} by mentioning a few open problems.

\section{Maintaining Lower Envelope of Pseudo-Lines}
\label{sec:dynamic_lower_env}

We describe a dynamic data structure to maintain the lower envelope of a set of $x$-monotone
pseudo-lines in $\R^2$ under insertions and deletions,
which also works for a more general class of planar curves; see below.

\subsection{Preliminaries}
\label{sec:prelims}

Let $E$ be a family of $x$-monotone pseudo-lines in $\reals^2$; a vertical line crosses each pseudo-line in exactly one point.
Let $\ell$ be a vertical line strictly to the left of 
the first intersection point in $E$. It defines a total order $\leq$ on the pseudo-lines in $E$, 
namely, for $e_1, e_2 \in E$, we have $e_1 \leq e_2$ if and 
only if $e_1$ intersects $\ell$ below $e_2$. Since each 
pair of pseudo-lines in $E$ cross exactly once, it follows 
that if we consider a vertical line $\ell'$ strictly to the 
right of the last intersection point in $E$, the order of 
the intersection points between $\ell'$ and $E$, from 
bottom to top, is reversed.

The \emph{lower envelope} $\cL(E)$ of $E$ is the 
$x$-monotone curve obtained by taking the pointwise 
minimum of the pseudo-lines in $E$, i.e., if we regard each pseudo-line of $E$ as the graph of a univariate function $e(x)$, 
then the lower envelope $\cL(E)$ is the graph of the function $\min_{e\in E} e(x)$, $x\in \reals$. 
A \emph{breakpoint} of $\cL(E)$ is an intersection point of two pseudo-lines that appears on $\cL(E)$, and an \emph{arc} or \emph{segment}
of $\cL(E)$ is the maximal contiguous portion of a pseudo-line of $E$ that appears on $\cL(E)$ (between two consecutive breakpoints).
Combinatorially, $\cL(E)$ can be represented by the sequence of its breakpoints and arcs in the increasing $x$-order;
the first and the last arcs of $\cL(E)$ are unbounded. 
The \emph{upper envelope} $\cU(E)$ of $E$ is similarly the $x$-monotone curve obtained by taking the pointwise maximum of the pseudo-lines in $E$.

In this section, we focus on $\cL(E)$. The following two properties of $\cL(E)$
are crucial for our data structure: 
\begin{itemize}
	\item[(A)] every pseudo-line contributes at most one segment to $\cL(E)$; and  
	\item[(B)] the order of these segments from left to right corresponds exactly to the 
order $\leq$ on $E$ defined above. 
\end{itemize}

We assume a computational model in which primitive operations on pseudo-lines, such as computing the 
intersection point of two pseudo-lines or determining the intersection point of a pseudo-line 
with a vertical line, can be performed in constant time.

\subsection{Data structure and operations}
\label{sec:pseudo-line-DS}

\paragraph{The tree structure.}
Our primary data structure for maintaining $\cL(E)$ is a balanced binary 
search tree (e.g., a red-black tree~\cite{Tar}) $\Xi$, which supports insertion and  deletion operations in $O(\log n)$ 
time.  The leaves of $\Xi$ contain 
the pseudo-lines, sorted from left to right according to the order defined above. 
An internal node $v \in \Xi$ represents the 
lower envelope of the pseudo-lines contained in the subtree rooted at $v$.
More precisely, every leaf $v$ of $\Xi$ stores a single pseudo-line $v.e \in E$. For a node $v$ 
of $\Xi$, we write $v.E$ for the set of pseudo-lines in the subtree rooted at $v$.
We denote the lower envelope of $v.E$ by  $v.\cL$.
Let $w$ (resp. $z$) be the left (resp.\ right) child of $v$. Then $w.\cL$ and $z.\cL$ intersect at one point $v.\chi$, and $v.\cL$ consists of the prefix (resp.\ suffix) of $w.\cL$ until $v.\chi$ (resp.\ of $z.\cL$ from $v.\chi$); see Figure~\ref{fig:children-envelope}.

\begin{figure}[htb]
	\centering
	\includegraphics{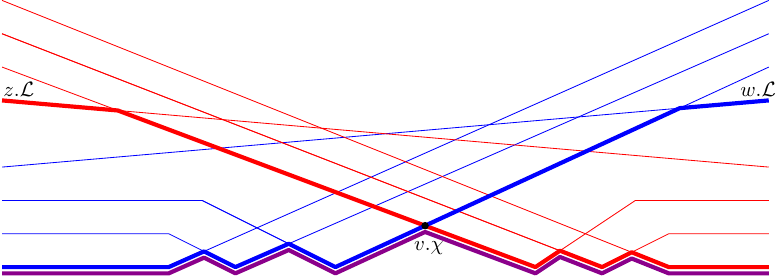}
	\caption{Constructing the lower envelope $v.\cL$ (purple) from $w.\cL$ (red) and $z.\cL$ (blue).}
	\label{fig:children-envelope}
\end{figure}

Each  node $v$ stores the following variables:
\begin{itemize}
\item $f$, $l$, $r$: a pointer to the parent, 
the left child, and the right child of $v$, respectively; $l, r$ are undefined for a leaf, and $f$ is undefined for the root;
\item $\max$: the \emph{last} pseudo-line in $v.E$ (according to
the order defined in Section~\ref{sec:prelims}); 
\item $\chi$: the intersection point of $(v.l).\cL$ and $(v.r).\cL$, the lower envelopes of the left and right children of $v$, if $v$ is an internal node; $\chi$ is undefined for leaves;
\item $\Lambda$: 
	a balanced binary search tree (e.g., a red-black tree) that stores the 
		prefix or the suffix of $v.\cL$, denoted by $\overline{\cL}$, that 
		is not on the lower envelope $(f.v).\cL$; the root of $\Xi$ stores the entire envelope $\cL(E)$.
		The leaves of $\Lambda$ represent the segments of $\overline{\cL}$ sorted from left to right. Each node $\xi$ of $\Lambda$ 
		is associated with a contiguous portion $\overline{\cL}_\xi$ of $\overline{\cL}$. Each leaf $\xi$ stores the endpoints of $\overline{\cL}_\xi$, which consists of a single segment, and the pseudo-line of $v.E$ that supports $\overline{\cL}_\xi$.
		Each inner node $\xi$ of $\Lambda$, with left and right children $\zeta$ and $\eta$, stores the common endpoint $\xi.p$ of $\overline{\cL}_\zeta$ and $\overline{\cL}_\eta$. We note that the two pseudo-lines supporting the last segment of $\overline{\cL}_\zeta$ and the first segment of $\overline{\cL}_\eta$ intersect at $\xi.p$ and the lower envelope of these two pseudo-lines, denoted by $\xi.\lclenv$, represents the lower envelope $v.\cL$ locally in the neighborhood of $\xi.p$. We store these two pseudo-lines at $\xi$. Since $\xi.\lclenv$ can be computed in $O(1)$ time from the two pseudo-lines, for simplicity, we can assume that we also store $\xi.\lclenv$ at $\xi$. See Section~\ref{sec:intersection-point} below for more details on $\Lambda$.
\end{itemize}

During our update procedure, we need to perform split and
join operations on the secondary trees $v.\Lambda$ at various
nodes $v$ in $\Xi$.  Each of these procedures can be implemented in $O(\log n)$ time using the standard methods~\cite[Chapter~4]{Tar}.

\paragraph{Queries.} 
We now describe the two query operations that we perform 
on $\Xi$.

\textbf{\textit{Point-location query.}}
Given a value $x_0 \in \R$, we report the pseudo-line 
$e \in E$ that contains the point on $\cL(E)$ with $x$-coordinate $x_0$. 
Since the root $u$ of $\Xi$ explicitly stores
$\cL(E)$ in a balanced binary search tree $u.\Lambda$,
this query can be answered in $O(\log n)$ time.

\begin{lemma}
\label{lem:vertical_rs}
For a given value $x_0\in\R$, a point-location query can be answered in $O(\log n)$ time.
\end{lemma}

\textbf{\textit{Ray-intersection query.}}
Given a point $q\in\R^2$, we report all pseudo-lines of $E$ that
lie vertically below $q$, i.e., report all pseudo-lines that intersect the ray emanating from $q$ in the $(-y)$-direction. 

Let $q_x$ be the $x$-coordinate of $q$. We perform a point-location query with  
$q_x$ on $\Xi$ and determine the pseudo-line $e$ that contains the point of $\cL(E)$ with $x$-coordinate $q_x$.
If $q$ lies below $e$, we are done.
Otherwise, we store $e$ in the result set and delete $e$ from $\Xi$.  We repeat this step until either $\Xi$ becomes empty or $q$ lies below the lower envelope of the remaining set. Finally, we re-insert all elements from the result set to restore the original set of pseudo-lines.  Overall, we need $k + 1$ point-location queries, $k$ deletions, and $k$ insertions. By Lemma~\ref{lem:vertical_rs}, each point-location query needs $O(\log n)$ time, and below we show that one update operation requires $O(\log^2 n)$ time.  Hence, we obtain the following.

\begin{lemma}
\label{lem:multiVerticalRs}
Let $q \in \R^2$. All $k$ pseudo-lines in $E$ that lie below $q \in \R^2$ can be reported in time $O(\log n + k \log^2 n)$.
\end{lemma}

\paragraph{Updates.} 
To insert or delete a pseudo-line $e$ in $\Xi$, we follow the method of Overmars and van Leeuwen~\cite{Overmars1981}. 
We delete or insert a leaf $z$ for $e$ in $\Xi$ using the standard techniques for balanced binary search trees 
(the $v.\max$ pointers guide the search in $\Xi$)~\cite{Tar}. We update the secondary structure stored at the nodes of $\Xi$, as follows. 
Let $\pi$ be the path in $\Xi$ from the root to $z$. As we go down along $\pi$, for each node $v \in \pi$ and its sibling $w$, 
we construct $v.\cL$ and $w.\cL$ from $(v.f).\cL$, stored as a balanced binary tree.  
If $v$ is the root, then $v$ already stores $v.\cL$, so assume $v$ is not the root and inductively we have $(v.f).\cL$ at our disposal. 
We split $(v.f).\cL$ at $(v.f).\chi$, and let $\Lambda^-$ (resp.\ $\Lambda^+$) be the prefix (resp.\ suffix) of $(v.f).\cL$. 
If $v$ is the left child of $v.f$, then $v.\cL$ (resp.\ $w.\cL$) is obtained by merging $\Lambda^-$ with 
$v.\Lambda$ ($w.\Lambda$ with $\Lambda^+$).  
If $v$ is the right child, then the roles of $v$ and $w$ are reversed. 
When we reach the leaf, we have the lower envelope at the siblings of all non-root nodes in $\pi$. 

After having inserted or deleted $z$, we trace $\pi$ back in a bottom-up manner. When we reach a node $v$, 
we have computed $(v.l).\Lambda$, $(v.r).\Lambda$, and $v.\cL$. At the node $v$, we  first compute the unique intersection point 
$(v.f).\chi$ of $v.\cL$ and $w.\cL$, where $w$ is the sibling of $v$, using the procedure described in the next subsection; 
recall that we already have computed $w.\cL$.  Suppose $v$ is the left child of its parent. 
We split $v.\cL$ into two parts $\cL_v^-, \cL_v^+$ at $(v.f).\chi$, with the former lying to the left. 
Similarly, we split $w.\cL$ into two parts $\cL^-_w, \cL^+_w$ at $(v.f).\chi$ (note that $v.f=w.f$) with the former lying to the left. 
We store $\cL_v^+, \cL_w^-$ as $v.\Lambda$ and $w.\Lambda$, respectively. We also update $v.\max$ and $w.\max$. We then merge $\cL_v^-$ and $\cL_w^+$ to obtain $(v.f).\cL$. We then move to $v.f$. If we reach the root of $\Xi$, then we simply store the envelope at $v.f$ and stop.

Since the height of $\Xi$ is $O(\log n)$, since each split/merge operations takes $O(\log n)$ time, and since, by Lemma~\ref{lem:intersection} below, the intersection point of two envelopes at each node  can be computed in $O(\log n)$ time,
the update procedure takes $O(\log^2 n)$ time. More details can be found, e.g.,
in the original paper by Overmars and van Leeuwen~\cite{Overmars1981}
or in the book by Preparata and Shamos~\cite{PreparataSh85}.

\begin{lemma}
	\label{lem:le-update}
An insert/delete operation in $\Xi$ takes $O(\log^2 n)$ time.
\end{lemma}

\subsection{Finding the intersection point of two lower envelopes}
\label{sec:intersection-point}

Given two lower envelopes $\cL_l$ and $\cL_r$, such that all pseudo-lines 
in $\cL_l$ are smaller than all pseudo-lines in $\cL_r$ and each envelope is stored in a balanced binary tree as described above, we give a procedure to compute the (unique) intersection point $q$ between $\cL_l$ 
and $\cL_r$ in $O(\log n)$ time. In our algorithm, $\cL_l$ 
and $\cL_r$ are stored as balanced 
binary search trees $\Lambda_l$ and $\Lambda_r$. 

The leaves of $\Lambda_l$ 
and $\Lambda_r$ represent the segments 
on the lower envelopes $\cL_l$ and $\cL_r$, sorted from left to
right. 
To ensure that every point on $\cL_l$ and
$\cL_r$ is associated with exactly one leaf of $\Lambda_l$
and $\Lambda_r$, we use the convention that the segments in the
leaves are semi-open, containing their right, but not their left, endpoint in $\Lambda_l$
and their left, but not their right, endpoint in $\Lambda_r$. Recall that we
store \emph{both} endpoints of a segment as well as the pseudo-line supporting the segment at each leaf of
$\Lambda_l$ and $\Lambda_r$,
but the segments are interpreted as relatively semi-open sets,
where the precise endpoint to be included  depends 
on the role that the tree plays in the intersection algorithm.
More concretely, the intersection algorithm uses two items stored  at a leaf $v$ of 
$\Lambda_l$ or $\Lambda_r$:
\begin{itemize}
	\item[(i)] the pseudo-line $v.\cL$ that supports the segment represented by $v$; and 
	\item[(ii)] an endpoint $v.p$ of the segment, namely the left endpoint if $v$ is a leaf of  $\Lambda_l$, 
		and the right endpoint if $v$ is a leaf of $\Lambda_r$.\footnote{If the
segment is unbounded, the endpoint
might not exist. In this case, we use
a symbolic endpoint at infinity that
lies below every other pseudo-line.}
Note that this is exactly the endpoint of the associated
segment that is \emph{not} 
included in the semi-open segment represented by $v$.
This choice is made to ensure a uniform
handling of inner nodes and leaves in the intersection
algorithm.
\end{itemize}

Consider an inner node $v$ of $\Lambda_l$ or $\Lambda_r$.
The intersection algorithm uses the two items stored at $v$: 
\begin{itemize}
	\item[(i)] the lower envelope $v.\lclenv$ of the last (maximum) pseudo-line in the left subtree 
$v.l$ of $v$ and the first (minimum) pseudo-line in the right subtree $v.r$ of $v$; and 
		\item[(ii)] the intersection point $v.p$ of these two pseudo-lines, which is the only breakpoint of $v.\lclenv$. 
\end{itemize}

\begin{figure}[htb]
\centering
	\includegraphics{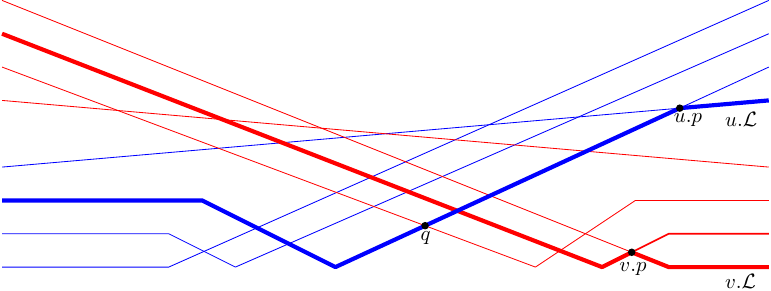}
\caption{An example of Case~1: The pseudo-lines
in $\Lambda_l$ are shown blue, the pseudo-lines
in $\Lambda_r$ are shown red.}
\label{f:case_three_pseudo_lines} 
\end{figure}

\begin{figure}[htb]
\centering
	\includegraphics{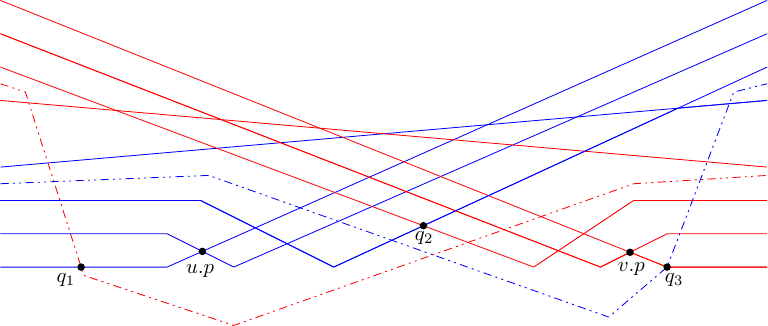}
\caption{An example of Case~3: 
	The pseudo-lines in $\Lambda_l$ (resp.\ $\Lambda_r$) are shown
	blue (resp.\ red).
The solid pseudo-lines are fixed. 
The dashed pseudo-lines are optional, meaning that either
none or exactly one of the dashed pseudo-lines 
is present. The current vertices of the binary search are  $u.p$
and $v.p$,
and Case~3 applies.
Irrespective of the local
situation at $u$ and $v$, the intersection point $q$ of
$\cL_l$ and $\cL_r$
	might be to the left of $u.p$ (e.g., $q_1$ in the figure), between $u.p$ and 
	$v.p$ (e.g., $q_2$ in the figure), or to the right of $v.p$ (e.g., $q_3$ in the figure), depending 
on which one of the dashed pseudo-lines
is present.}
\label{f:hard_case_pseudo_lines} 
\end{figure}

As discussed above, the leaf $u^*$ of $\Lambda_l$ and
the leaf $v^*$ of $\Lambda_r$ whose 
(half-open) segments 
contain the intersection point $q$ between $\cL_l$ and
$\cL_r$ are 
uniquely determined.
Let 
$\pi_l$ be the path in $\Lambda_l$
from the root to $u^*$ and 
$\pi_r$ 
the path in $\Lambda_r$ from the root to $v^*$. 
Our strategy is as follows: we simultaneously 
descend into $\Lambda_l$ and $\Lambda_r$, following the paths $\pi_l$ and $\pi_r$, starting
from the respective roots. 
Let $u$  be the current node in $\pi_l$ 
and let $v$ be the current node in $\pi_r$.
At each step, we perform a local test on 
$u$ and $v$, comparing  $u.p$ with $v.\lclenv$ and $v.p$ with $u.\lclenv$, 
to decide how to proceed. 
The test distinguishes among three possibilities:

\begin{enumerate}
	\item \textit{The point $u.p$ lies on or above the (local) lower envelope $v.\lclenv$.} 
	In this case, $u.p$ lies on or above the envelope $\cL_r$. Therefore, the intersection point $q$ 
	between $\cL_l$ and $\cL_r$ must be equal to or to the left of $u.p$; 
		see Figure~\ref{f:case_three_pseudo_lines}. If $u$ is an inner node, then the desired leaf $u^*$ cannot lie in
		the right subtree $u$ (recall that the half-open segments in the leaves
		of $\Lambda_l$ are considered to be open to the left).
		If $u$ is a leaf, then $u^*$ lies strictly
		to the left of $u$ (recall that in this case, $u.p$ is
		the left endpoint of the segment stored in $u$, so 
		$u.p$ does not belong to the half-open segment in $u$ but to the half-open
		segment in the predecessor-leaf).

	\item \textit{The point $v.p$ lies on or above the (local) lower envelope  $u.\lclenv$.} 
		In this case, $v.p$ lies on or above the entire envelope $\cL_l$, therefore 
		the intersection point $q$ between $\cL_l$ and $\cL_r$
		is equal to or to the right of $v.p$; this situation is symmetric to the one depicted in
		Figure~\ref{f:case_three_pseudo_lines}.
		If $v$ is an inner node, then $v^*$ cannot lie in
		the left subtree $v$ (recall that the segments in the leaves of $\Lambda_r$ are
		considered to be open to the right). If $v$ is a leaf, then $v^*$ lies 
		strictly to the right of $v$ (recall that in this case, $v.p$ is
		the right endpoint of the segment stored in $v$, so $v.p$ does not
		belong to the half-open segment in $v$, but to the half-open segment in the successor-leaf). 

	\item \textit{The point $u.p$ lies below the (local) lower envelope $v.\lclenv$ and the point $v.p$ lies 
		below the (local) lower envelope  $u.\lclenv$:} in this case, the point $u.p$ must lie
		strictly to the left of the point $v.p$. 
		This claim follows from property~(B) of pseudo-lines because  all pseudo-lines
		in $\Lambda_l$ are smaller than all pseudo-lines
		in $\Lambda_r$; see Figure~\ref{f:hard_case_pseudo_lines}.
		Thus, it follows that the intersection point $q$ 
		is strictly to the right of $u.p$ or strictly to the left of
		$v.p$ (both situations can occur simultaneously, if $q$ lies between $u.p$ and $v.p$).
		In the former case, if $u$ is an inner node, then
		$u^*$ lies in $u.r$  or to the right of all leaves in $u.r$, and if $u$ is
		a leaf, then either $u^* = u$ or $u^*$ is a leaf to the right of $u$. In the
		latter case, if $v$ is an inner node, then $v^*$ lies in $v.l$
		or to the left of all leaves in $v.l$, and if $v$ is a leaf, then $v^* = v$
		or $v^*$ is a leaf to the left of $v$.
\end{enumerate}

In the first two cases, it is easy to perform the next
step in the binary search.
In the third case, however, it is not immediately
obvious what to do. The correct choice
might be either to go to $u.r$ or to $v.l$. 
For the straight-line case, Overmars and van Leeuwen
resolve this ambiguity by comparing the slopes
of the relevant lines. For pseudo-lines, however,
there is no notion of slope.
Even worse, it seems that there is no local test
to resolve this situation. For an example,
refer to Figure~\ref{f:hard_case_pseudo_lines}, where
the local situation at $u$ and $v$ does not help to determine
the position of the intersection point $q$. 
We present an alternative strategy, which also applies 
for pseudo-lines.

\begin{figure}
    \centering
    \includegraphics[scale=0.73]{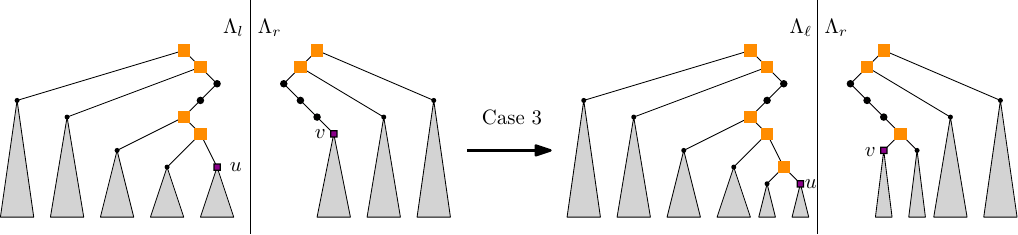}
    \caption{Comparing $u$ to $v$: in Case~3,
    we know that $u^*$ is in $u.r$ or $v^*$ is in $v.l$; we go to
    $u.r$ and to $v.\ell$.}
    \label{fig:case3}
\end{figure}

Throughout the search, we maintain 
the invariant that the subtree at the current node $u$
of $\Lambda_l$ contains the desired leaf $u^*$ \emph{or} the
subtree at the current node $v$ of $\Lambda_r$ contains the desired node $v^*$ 
(or both). In Case~3, as explained above, it holds that
$u^*$ must be
in $u.r$ \emph{or} $v^*$ must be in $v.l$ (or both);
see Figure~\ref{fig:case3}. 
Thus, we will move $u$ to $u.r$
and $v$ to $v.l$. One of these moves must be correct,
but the other move might be mistaken: we might go
to $u.r$ even though $u^*$ is in $u.l$; or to $v.l$
even though $v^*$ is in $v.r$. To account for this possible 
mistake,
we remember the current node $u$ in a stack \texttt{uStack} and
the current node $v$ in a stack \texttt{vStack}. Then, if
it becomes necessary, we can backtrack and
revisit the other subtree $u.l$ or $v.r$. This approach leads
to the general situation shown in Figure~\ref{fig:invariant}:
The desired leaf $u^*$ is in the subtree of $u$ or in a left subtree of a node
on $\texttt{uStack}$, while the desired leaf $v^*$ is in the 
subtree  of $v$ or in a right 
subtree of a node on $\texttt{vStack}$, and at least one of
$u^*$ or $v^*$ must be in the subtree of $u$ or of $v$, respectively. Now, if
Case~1 occurs when comparing $u$ to $v$, we can exclude the
possibility that $u^*$ is in $u.r$. Thus, $u^*$ might be in
$u.l$, or in the left subtree of a node in \texttt{uStack}; 
see Figure~\ref{fig:case1}.
To make progress, we now compare $u'$, the top of \texttt{uStack},
with $v$. Again, one of the three cases occurs:
\begin{itemize}
	\item[(i)] In Case~1, we can deduce that going to $u'.r$ was mistaken, and we move
$u$  to $u'.l$, while $v$ does not move. 
	\item[(ii)] In the other cases, we cannot rule out that $u^*$ is to the right of $u'$, and we 
move $u$ to $u.l$, keeping the invariant that $u^*$ is either
below $u$ or in the left subtree of a node on \texttt{uStack}.
However, to ensure that the search progresses, we now must also
move $v$:
		\begin{itemize}
			\item In Case~2, we can rule out that $v^*$ lies in $v.l$, and we move 
$v$ to $v.r$. 
				\item In Case~3, we move $v$ to $v.l$. 
		\end{itemize}
\end{itemize}

	In this way, we keep the invariant and always make progress: in each step,
we either discover at least one new node on either of the two correct search
paths, or we pop one erroneous move from one of the two stacks.
Since the total length of the correct search paths is
$O(\log n)$, and since we push a new element onto the stack 
only when discovering a new node on either of the correct search
paths, 
the total search time is $O(\log n)$; see 
Figures~\ref{f:intersection_point_demo2} and~\ref{f:intersection_point_demo}  and
Table~\ref{fig:algodemo} in Appendix for 
an example run of the algorithm.

The following pseudo-code gives the details
of our algorithm, including all corner cases.

{\SetAlgoNoLine%
\begin{algorithm}[H]
oneStep($u$, $v$)\\
\Indp     do compare($u$, $v$):\\
\Indp         Case 3: \\
\Indp           \If{$u$ is not a leaf}
                {
                    uStack.push($u$); $u \leftarrow u.r$
                }
                \If{$v$ is not a leaf}
                {
                    vStack.push($v$); $v \leftarrow v.l$
                }
                \If{$u$ and $v$ are leaves}
                {
                    return $u = u^*$ and $v = v^*$
                }
\Indm       Case 1:\\
\Indp           \uIf{uStack is empty}
                {
                    $u \leftarrow u.l$\;
                }
                \uElseIf{ $u$ is a leaf}
                {
                    $u \leftarrow \texttt{uStack.pop().}l$
                }
                \Else
                {
                    $u' \leftarrow \texttt{uStack.top()}$\\
                    do compare($u'$, $v$)\\
\Indp                   Case 1: \\
\Indp                      uStack.pop(); $u \leftarrow u'.l$\;
\Indm                   Case 2:\\
\Indp                       $u \leftarrow u.l$ \\
                            \If{$v$ is not a leaf}
                            {
                                $v. \leftarrow v.r$
                            }
\Indm                  Case 3:\\
\Indp                       $u \leftarrow u.l$\\
                            \If{ $v$ is not a leaf}
                            {
                                vStack.push($v$); $v\; \leftarrow v.l$
                            }
                             
                }
\Indm       Case 2:\\
\Indp           symmetric
\end{algorithm}
}

\begin{figure}
    \centering
    \includegraphics{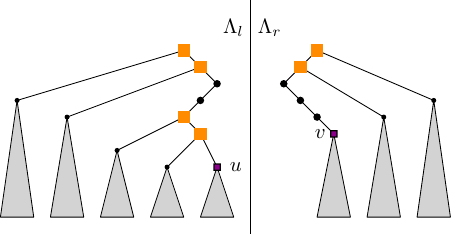}
    \caption{The invariant:
    the current search nodes are $u$ and $v$.
    \texttt{uStack} contains all nodes on the
    path from the root to $u$ where the path goes to a right
    child (orange squares), \texttt{vStack} contains all
    nodes from the root to $v$ where the path goes to a left child 
    (orange squares). The final leaves $u^*$ and $v^*$ are in one of the
    gray subtrees; and at least one of them is under $u$ or under $v$.}
    \label{fig:invariant}
\end{figure}
\begin{figure}
    \centering
    \includegraphics[scale=0.73]{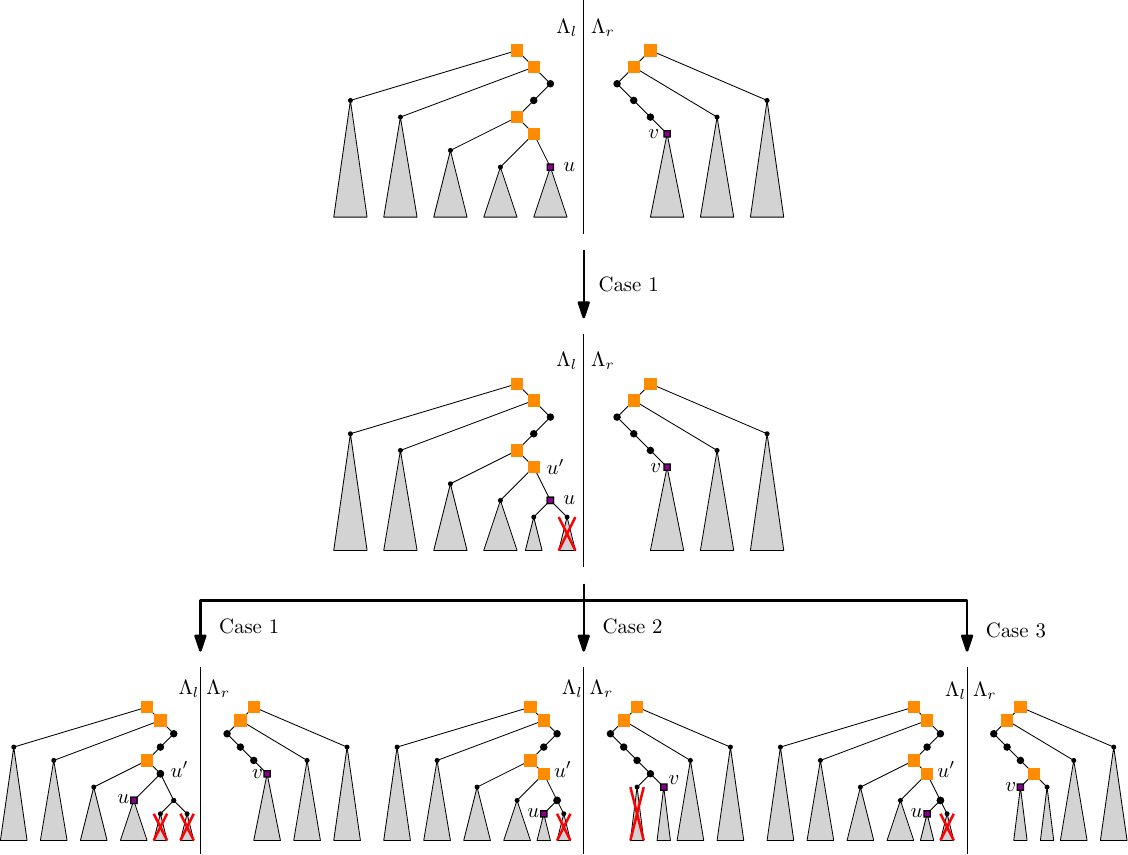}
    \caption{Comparing $u$ to $v$:
    in Case~1, we know that
    $u^*$ cannot be in $u.r$. 
    We compare $u'$ and $v$ to decide
    how to proceed: 
    in Case~1, we know that $u^*$ cannot be in $u'.r$; we go to
    $u'.l$; in Case~2, we know that $u^*$ cannot be in $u.r$ and that 
    $v^*$ cannot be in $v.l$; we go to $u.l$ and to $v.r$; in Case~3,
    we know that $u^*$ is in $u'.r$ (and hence in $u.l$) or in $v.l$;
    we go to $u.l$ and to $v.l$. Case~2 is not
    shown, as it is symmetric.}
    \label{fig:case1}
\end{figure}

We will show that the search procedure maintains
the following invariant:
\begin{invariant}
\label{inv:intersection}
The leaves in all subtrees $u'.l$, for 
$u' \in \textup{\texttt{uStack}}$, together with the
leaves under $u$ constitute a 
prefix of the leaves in  $\Lambda_l$. This prefix 
contains $u^*$. 
Similarly, the leaves in all subtrees
$v'.r$, $v' \in \textup{\texttt{vStack}}$,  
together with the leaves under $v$ constitute a 
contiguous suffix of the leaves of $\Lambda_r$.
This suffix contains $v^*$. Furthermore, we have
$u \in \pi_l$ or $v \in \pi_r$ (or both).
\end{invariant}

Invariant~\ref{inv:intersection} holds at the 
beginning, when both stacks are empty,
$u$ is the root of $\Lambda_l$ and $v$ is the
root of $\Lambda_r$. To show that the invariant
is maintained, we first consider the special case
when one of the two searches has already discovered
the correct leaf. Recall that $\pi_l, \pi_r$ are the root-to-leaf paths to $u^*$, $v^*$ in 
$\Lambda_l$ and $\Lambda_r$, respectively.

\begin{lemma}
\label{lem:leaf_inv}
Suppose that Invariant~\ref{inv:intersection} holds and
that Case~3 occurs when comparing $u$ to $v$.
If $u = u^*$, then
$v \in \pi_r$ and, if $v$ is not a leaf, then $v.l \in \pi_r$.
Similarly, if $v = v^*$, then $u \in \pi_l$ and,
if $u$ is not a leaf, then $u.r \in \pi_l$.
\end{lemma}

\begin{proof}
We consider the case $u = u^*$; the other case is symmetric. 
Let $e_u$ be the segment of $\cL_l$ stored in $u$.
By Case~3, the point $u.p$ is strictly to the left of the point $v.p$. 
Furthermore, since $u = u^*$, the intersection point
$q$ lies on $e_u$. Thus, $q$ cannot
be to the right of $v.p$, because otherwise
$v.p$ would be a point on $\cL_r$ that lies below $e_u$
and to the left of $q$, which is impossible.
Since $q$ is strictly to the left of $v.p$,
Invariant~\ref{inv:intersection} shows that if $v$ is
an inner node, then $v^*$ must be in $v.l$ (and hence
both $v$ and $v.l$ lie on $\pi_r$), and if $v$ is a
leaf, then $v = v^*$.
\end{proof}

We can now show that the invariant is maintained.

\begin{lemma}
Procedure \textup{\texttt{oneStep}} either correctly
reports the desired leaves $u^*$ and $v^*$, or maintains 
Invariant~\ref{inv:intersection}. In the latter case, either
it pops an element from one of the two stacks, or it
discovers a new node on $\pi_\ell$ or $\pi_r$.
\end{lemma}

\begin{proof}
First, suppose Case~3 occurs. The invariant
that \texttt{uStack} and $u$ cover a prefix of
$\cL_l$ and that \texttt{vStack} and $v$
cover a suffix of $\cL_r$ is maintained.
Furthermore, if both $u$ and $v$ are inner nodes,
Case~3 ensures that $u^*$ is in $u.r$ or
to the right of $u$, or that $v^*$ is in 
$v.l$ or to the left of $v$. Suppose the former case
holds. Then, Invariant~\ref{inv:intersection} 
implies that $u^*$ must be in $u.r$, and
hence $u$ and $u.r$ lie on $\pi_l$. 
Similarly, in the second case, 
Invariant~\ref{inv:intersection} gives that
$v$ and $v.l$ lie on $\pi_r$.
Thus,
Invariant~\ref{inv:intersection} is maintained
and we discover a new node on $\pi_l$ or
on $\pi_r$.
Now, assume $u$ is a leaf and $v$ is an inner node.
If $u \neq u^*$, then as above, 
Invariant~\ref{inv:intersection} and Case~3 imply
that $v \in \pi_r$ and $v.l \in \pi_r$, 
and the lemma holds.
If $u = u^*$, the lemma follows from Lemma~\ref{lem:leaf_inv}.
The case that $u$ is an inner node and $v$ a 
leaf
is symmetric. If both $u$ and $v$ are leaves, Lemma~\ref{lem:leaf_inv}
implies that \texttt{oneStep} correctly  reports $u^*$ and
$v^*$.

Second, suppose Case~1 occurs. Then, 
$u^*$ cannot be in $u.r$, if $u$ is an
inner node, or  $u^*$ must be to the left
of a segment left of $u$, if $u$ is 
a leaf.
Now, if \texttt{uStack}
is empty, Invariant~\ref{inv:intersection}
and Case~1 imply that $u$ cannot be a leaf 
(because $u^*$ must be in the subtree of $u$)
and that $u.l$ is a new node on $\pi_l$.
Thus, the lemma holds in this case. 
Next, if $u$ is a leaf,  
Invariant~\ref{inv:intersection} and
Case~1 imply that $v \in \pi_r$. Thus, we pop
\texttt{uStack} and maintain the invariant; 
the lemma holds.
Now, assume that \texttt{uStack} is not
empty and that $u$ is not a leaf. 
Let $u'$ be the top of $\texttt{uStack}$.
First, if the comparison between $u'$ and $v$ results
in Case~1, then $u^*$ cannot be in
$u'.r$, and in particular, $u \not\in \pi_l$.
Invariant~\ref{inv:intersection} shows
that $v \in \pi_r$, 
and we pop an element from \texttt{uStack},
so the lemma holds.
Second, if the comparison between $u'$ and $v$
results in Case~2, then $v^*$ cannot
be in  $v.l$, if $v$ is an inner node.
Also, if $u \in \pi_l$, then necessarily also
$u.l \in \pi_l$, since Case~1
occurred between $u$ and $v$. If $v \in \pi_r$,  
since Case~2 occurred between $u'$ and $v$, the node
$v$ cannot
be a leaf and $v.r \in \pi_r$. Thus, in either case
the invariant is maintained and we discover a new
node on $\pi_l$ or on $\pi_r$.
Third, assume the comparison between
$u'$ and $v$ results in Case~3. If
$u \in \pi_l$, then 
also $u.l \in \pi_l$,
because $u.r \in \pi_l$ was excluded by
the comparison between $u$ and $v$. In this case,
the lemma holds. If $u \not\in \pi_l$,
then also $u'.r \not \in \pi_l$, so the fact
that Case~3 occurred between $u'$ and $v$ implies that
$v.l$ must be on $\pi_r$ (in this case,
$v$ cannot be a leaf, since otherwise we would
have $v^* = v$ and Lemma~\ref{lem:leaf_inv} would
give $u'.r \in \pi_l$, which we have already ruled out).
The argument for Case~2 is symmetric.
\end{proof}

The following lemma finally shows that our intersection
procedure finds the desired point in logarithmic time.

\begin{lemma}
	\label{lem:intersection}
    The intersection point $q$ between $\cL_l$ and 
    $\cL_r$ can be computed in $O(\log n)$ time.
\end{lemma}

\begin{proof}
    In each step, we either discover a new node of 
    $\pi_l$ or of $\pi_r$, or we pop an element
    from \texttt{uStack} or \texttt{vStack}. 
    Elements are pushed only when
    at least one new  node on $\pi_l$ or 
    $\pi_r$ is discovered. 
    As $\pi_l$ and $\pi_r$ are each a path from the root to a leaf 
    in a balanced binary tree, 
    we  need $O(\log n)$
    steps.
\end{proof}

Putting everything together, we obtain the following:

\begin{theorem}
	\label{th:pseudo-lines}
	A set $E$ of $n$ pseudo-lines can be maintained in a data structure so that (i) a pseudo-line can be inserted/deleted in $O(\log^2 n)$ time; 
	(ii) for a query value $x_0\in\R$, the point of $\cL(E)$ with the $x$-coordinate $x_0$ and the input pseudo-line containing this point 
	can be computed in $O(\log n)$ time; and 
	(iii) all $k$ pseudo-lines of $E$ lying below a query point $q\in\R^2$ can be reported in $O(\log n+k\log^2 n)$ time. 
\end{theorem}

\section{Maintaining the Union of Unit Discs under Insertions}
\label{sec:union_maintain}

Let $S$ be a set of $n$ points in $\reals^2$, and let $U := U(S) = \bigcup_{p\in S} D(p)$ be the union of the unit discs centered at the points of $S$.  
In this section, we describe a data structure  that
maintains $\edges$, the set of edges of $\bd U$. After the insertion of a new point to $S$, it updates $\edges$ in $O(\log n + k\log^2 n)$ time, where $k$ is the number of changes (insertions plus deletions) in the set $\edges$. It can also report, within the same time bound, the area of $U$, denoted
$\area U$, after the insertion of each point. 

This section is organized as follows. Section~\ref{sec:overview} gives a high-level description of the overall data structure and of the update procedure. Section~\ref{sec:edge-intersection} describes the data structure for reporting the set of edges in $\edges$ that intersect a unit disc, which relies on the data structure described in the previous section. Finally, Section~\ref{sec:properties} proves a few key properties of $\edges$ that are crucial for our data structure.

\subsection{Overview of the data structure}
\label{sec:overview}

The overall data structure consists of two parts.
Let $\GG$ be a uniform grid in $\R^2$ such that the diameter of each grid cell is $1$. We call a grid cell of $\GG$ \emph{active} if it intersects $U$. Let $\G \subset \GG$ be the set of active grid cells. Each unit disc intersects $O(1)$ grid cells, so $|\G| = O(n)$. 
We define the \emph{key} of a grid cell to be the $x$- and $y$-coordinates of 
its bottom left corner, and we induce a total ordering on the grid cells by using the lexicographic ordering on their keys. 
Using this total ordering, we store $\G$ in a balanced binary search tree (e.g.\ red-black tree) $\Omega$. A membership query and an update operation on $\G$ can be performed in $O(\log n)$ time~\cite{Tar}.

\begin{figure}[ht]
	\centering
	\includegraphics{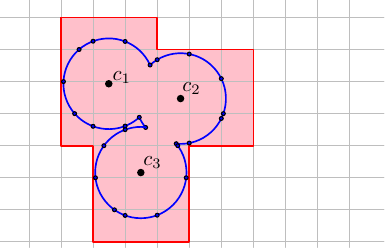}
	\caption{The grid imposed over the union of unit discs.  The active cells are highlighted in pale red.}
    	\label{f:grid_structure} 
\end{figure}

We overlay $U$ with $\GG$. If an edge of $\bd U$ intersects 
more than one grid cell, then we split it at the 
boundaries of the cells that it crosses (see Figure~\ref{f:grid_structure}). We can therefore assume that each edge of $\edges$ lies within a single cell. For each cell $C\in\G$, let $\edges_C \subseteq \edges$ denote the set of edges that lie inside $C$. 
We maintain $\edges_C$ in a dynamic data structure $\Psi_C$, described in Section~\ref{sec:edge-intersection},  that 
\begin{itemize}
	\item[(i)] for a query point $q$, reports, in $O(\log n +k_C\log^2 n)$ time, the subset $\edges_{q,C} \subseteq \edges_C$ of edges that intersect $D(q)$, where $k_C=|\edges_{q,C}|$; and 
	\item[(ii)]  
can handle insertion or deletion of an edge in $\edges_C$ in $O(\log^2 n)$ time.
\end{itemize}
See Lemma~\ref{lem:edge-intersection} below.

Using $\Omega$ and $\Psi_C$, for all grid cells $C\in\G$, the insertion of a point $q$ into $S$ is handled as follows. To avoid confusion, we use $U$ (resp.\ $U^{\mathrm{new}}$) to denote $U(S)$ immediately before (resp.\ after) the insertion of $q$.
\begin{enumerate}
\item\label{step:one} Compute the set $\GG_q$ of $O(1)$ grid cells that the new disc $D(q)$ intersects. 
\item\label{step:two} Find the subset $\G_q = \GG_q \cap \G$ of active cells (before the insertion of $q$) that intersect $D(q)$.
\item\label{step:edge-intersection} For each cell $C\in \G_q$, using the data structure $\Psi_C$, report the subset $\edges_{q,C} \subseteq \edges_C$ of edges that $D(q)$ intersects. 
	Set $\edges_q = \bigcup_{C\in\G_q} \edges_{q,C}$ and $k=|\edges_q|$.
\item\label{step:update} Compute the set $I_q$ of new edges on $U^{\mathrm{new}}$. We split the edges of $I_q$ at the grid boundaries so that each edge lies within one grid cell. For each cell $C\in\GG_q$, let $I_{q,C} \subseteq I_q$ be the set of edges that lie inside $C$.
\item\label{step:update-DS} For each cell $C\in\GG_q$, delete the edges of $\edges_{q,C}$ from $\Psi_C$ and insert the edges of $I_{q,C}$ into $\Psi_C$. If $C \not\in \G$, insert $C$ into $\Omega$.
\item\label{step:area} Compute $\area U^{\mathrm{new}}$.
\end{enumerate}

Steps~\ref{step:one} and~\ref{step:two} are straightforward and can be implemented in $O(\log n)$ time using $\Omega$. 
Steps~\ref{step:edge-intersection} and~\ref{step:update-DS} can be implemented using the procedures described in 
Section~\ref{sec:edge-intersection}. By Lemma~\ref{lem:edge-intersection}, the total time spent in these two steps\footnote{Wherever we describe a certain data structure, the parameter $k$ pertains to the output size of a query in that specific data structure.} is 
$O(\log n + (k+|I_q|)\log^2 n)=O((k+1)\log^2 n)$ because as we will see below, $|I_q|=O(k+1)$. We now describe how to compute the set $I_q$ of new edges (Step~\ref{step:update}) and $\area U^{\mathrm{new}}$ (Step~\ref{step:area}).

\paragraph{Updating the boundary of the union.}
First, consider the case when $k=0$, then either $D(q) \subset U$ or $D(q)\cap U = \emptyset$. Since the diameter of each grid cell in $\GG$ is $1$, at least one of the grid cells, denoted by $\omega$, is 
fully contained in $D(q)$. If $\omega \in \G_q$, then
$\omega\subset U$ (because $\omega\cap U\ne \emptyset$ but $\omega\cap\bd U=\emptyset$ since $\edges_q=\emptyset$)
and therefore $D(q) \subset U$; otherwise $D(q) \cap U = \emptyset$. If $D(q) \subset U$, then $\bd U^{\mathrm{new}}=\bd U$, and there is nothing to do. On the other hand, if $D(q)\cap U=\emptyset$, then the entire $\bd D(q)$ appears on 
$\bd U^{\mathrm{new}}$. We split $\bd D(q)$ at the boundary of the grid cells, and $I_q$ is the resulting set of edges.

We now assume that $k > 0$. The set $I_q$ contains two types of edges:
\begin{itemize}
	\item[(i)] The edges that lie on the boundaries of older discs. These edges are the portions of the edges of $\edges_q$ that lie outside $D(q)$.
	\item[(ii)] The edges that lie on $\bd D(q)$. These edges are (maximal) connected arcs of $\bd D(q)\setminus U$.
\end{itemize}
To compute the first type of edges, for each edge $e\in \edges_q$, we compute the intersection points of $e\cap \bd D(q)$.
If $e \not\subset D(q)$ then we compute $e \setminus D(q)$, which comprises one or two arcs, and which we  add to $I_q$. 

Let $V$ be the set of intersection points of $\bd D(q)$ and the edges of $\edges_q$. We sort $V$ along $\bd D(q)$. These intersection points partition $\bd D(q)$ into arcs. Each such arc $\gamma$ either lies inside $U$ or outside $U$, and we can detect it in $O(1)$ time. If $\gamma$ lies outside $U$, we add $\gamma$ to $I_q$. 

It follows from the above discussion that $|I_q|=O(k+1)$ and that the total time spent in computing $I(q)$ is $O((k+1)\log n)$.
\begin{figure}[htb]
	\centering
	\includegraphics{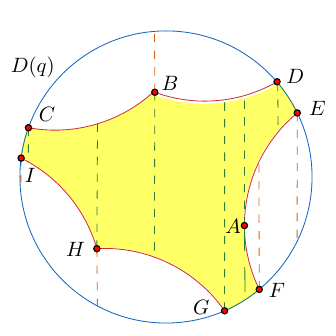}
	\caption{Arrangement $\A$ and its vertical decomposition $\A^\nabla$; shaded pseudo-trapezoids do not lie in $U$.}
	\label{fig:arrangement}
\end{figure}

\paragraph{Computing the area of the union.} We now describe how we extend the procedure for computing $I_q$ to compute $\Delta A = \area U^{\mathrm{new}}-\area U$.
If $k=0$, then as described above, we determine whether $D(q)\subset U$ or $D(q)\cap U=\emptyset$. We have $\Delta A=0$ in the former case, and $\Delta A = \area D(q)=\pi$ in the latter case. We now focus on the case $k>0$.

Let $\edges_q^{\mathrm{in}} = \{ e\cap D(q) \mid e \in \edges_q \}$ be the portions of edges in $\edges_q$ clipped within $D(q)$. $\edges_q^{\mathrm{in}}$ can be computed, in $O(k)$ time, by adapting the procedure for computing $I_q$. 
We note that the relative interiors of edges in $\edges_q^{\mathrm{in}}$ are pairwise disjoint. Let 
$\A$ be the arrangement of $\edges_q^{\mathrm{in}}\cup \bd D(q)$ within $D(q)$, i.e., the decomposition of $D(q)$ induced by these arcs (see~\cite{hs-a-18} for further details on geometric arrangements).
Let $\A^\nabla$ be the vertical decomposition of $\A$, i.e., the refinement of $\A$ obtained  by drawing vertical rays in both $+y$- and $-y$-directions from each vertex of $\A$ or a point of vertical tangency on an arc of $\edges_q^{\mathrm{in}}$, within $D(q)$, until it meets another edge of $\A$. $\A^\nabla$ partitions the faces of $\A$ into pseudo-trapezoids, each bounded by at most two vertical segments and by two circular 
arcs at top and bottom.  See Figure~\ref{fig:arrangement}.
Each pseudo-trapezoid $\tau\in\A^\nabla$ is either contained in $U$ or disjoint from $U$. 
Let $\F$ be the set of faces of $\A^\nabla$ that do not lie in $U$. Then $\Delta A = \sum_{\tau\in \F} \area\tau$.

$\A^\nabla$ can be computed in $O(k\log k)$ time using a sweep-line algorithm~\cite{dBCvKO} (recall that $k>0$ here). 
The sweep-line algorithm can be adapted in a straight-forward manner to compute $\F$ within the same time bound.
For each face $\tau \in\F$, we compute $\area\tau$ in $O(1)$ time and then add them up to compute $\Delta A$. Finally, we set 
$\area U^{\mathrm{new}}=\area U+\Delta A$. After having computed the edges of $U^{\mathrm{new}}$, the total time spent in computing 
$\area U^{\mathrm{new}}$ is $O(k\log k)$.

Putting everything together, we obtain the following.

\begin{theorem}
\label{thm:Near_Optimal_Overhead}
(i) The  boundary edges of the union of a set of $n$ unit discs can be maintained under insertion 
	in a data structure of $O(n)$ size so that a new disc can be inserted in 
	$O((k+1)\log^2 n)$ time, where $k$ is the total number of changes on the boundary of the union.

(ii) The same data structure can also report the area of the union after the insertion of each disc in 
	$O((k+1)\log^2 n)$ time, where $k$, as above, is the total number of changes on the boundary of the union.
\end{theorem}

\subsection{Edge intersection data structure}
\label{sec:edge-intersection}

Let $C$ be an axis-parallel square with diameter $1$, representing a grid cell of $\GG$, and let $\edges_C$ be the set of edges of $\bd U$ that lie in $C$. 
We describe a dynamic data structure $\Psi_C$ to report the edges of $\edges_C$ intersected by a unit disc. It can quickly insert or delete an edge of $\edges_C$.

\begin{figure}[htb]
\centering
\includegraphics{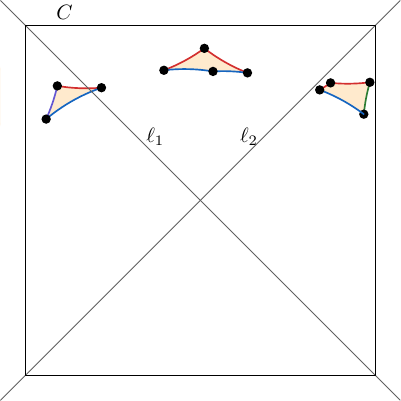}
	\caption{An example of $\partial U \cap C$ with four types of edges $E_t$ (red), $E_b$ (blue), $E_l$ (magenta), and $E_r$ (green).
	}
\label{f:E_t_example} 
\end{figure}

Let $\ell_1$ and $\ell_2$ be the lines that support the diagonals of $C$. The lines $\ell_1$ and
$\ell_2$ divide the plane into four quadrants: the top quadrant $Q_t$, the right
quadrant $Q_r$, the bottom quadrant $Q_b$, and the left quadrant $Q_l$.
We partition the edges of $\edges_C$ into four sets $E_t$, $E_r$, $E_b$, and $E_l$, 
depending on which quadrant $Q_t$, $Q_r$, $Q_b$, or $Q_l$
contains the center of the respective disc; see Figure~\ref{f:E_t_example}.
If a disc center lies on a dividing line $\ell_1$, $\ell_2$,
then the tie is broken in favor of $E_t$ or of $E_b$ (in that order).
We focus on the edges of $E_t$; analogous statements hold also for the other three edge sets.\footnote{%
	The algorithm in De Berg~et~al.~\cite{BBJW21} also draws a uniform grid so the diameter of each grid cell is $1$. 
	For each grid cell, they build data structures for the unit discs whose centers lie in that cell. In contrast, for each grid cell, we build data structures on union arcs that lie in that cell.}
Before describing the data structure, we prove a few key properties of $E_t$.

\begin{lemma}
\label{l:open_lower_semi_circle}
Each edge in $E_t$ is a portion of a lower semi-circle.
\end{lemma}

\begin{proof}
Let $e \in E_t$, and let $c \in Q_t$ be the center of the unit disc whose boundary contains $e$. 
Since the cell $C$ has unit diameter, $c$  lies outside $C$
and above the line that supports the top side of $C$.
Thus, $e$, which lies inside $C$, is a portion of the open lower semi-circle of $D(c)$.
\end{proof}

\begin{lemma}
\label{lemma:disoint_x_projection}
The $x$-projections of the (relative interiors of the) 
edges in $E_t$ are pairwise disjoint.
\end{lemma}

\begin{proof}
Let $e_i$ and $e_j$ be two distinct edges of $E_t$. 
Suppose that there is a vertical line $\ell$ that 
intersects both $e_i$ and $e_j$, in points $p_i$ and $p_j$,
respectively. For concreteness, assume
that $p_i$ lies below $p_j$.
By Lemma~\ref{l:open_lower_semi_circle}, the point $p_i$ 
lies on the lower semi-circle of a disc $D_i$ whose center is above 
the upper side of $C$. 
This means that the vertical segment that connects $p_i$ 
to the upper side of $C$ is fully contained in $D_i$. 
But then, $p_j$ cannot be on the boundary $\partial U$ of $U$.
Thus, the vertical line $\ell$ cannot exist, and the $x$-projections
of the edges in $E_t$ have pairwise disjoint interiors.
\end{proof}

By Lemma~\ref{lemma:disoint_x_projection}, 
the edges in $E_t$ can be ordered from left to right,
according to their $x$-projections.
We number them
as  $e_1, \dots, e_m$, according to this order.

\begin{figure}[htb]
	\centering
	\includegraphics{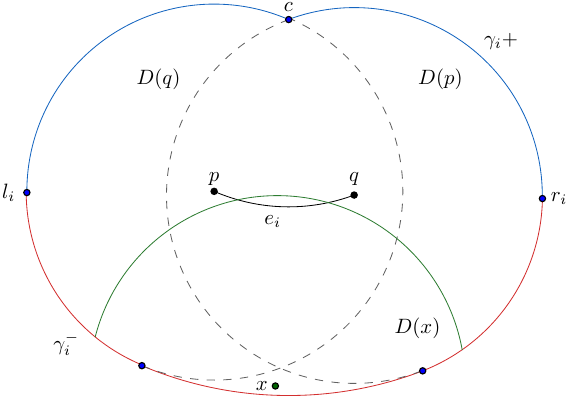}
	\caption{An arc $e_i$ of $E_t$ with endpoints $p$ and $q$, its Minkowski sum with a unit disc, and its upper (blue) and lower arcs (red) $\gamma^+$ and $\gamma^-$ respectively. A unit disc $D(q)$ intersects $e_i$ if and only if $q$ lies above $\gamma_i^-$ and below $\gamma_i^+$.}
	\label{fig:Mink}
\end{figure}

For an edge $e_i \in E_t$, let $K_i = e_i \oplus D(0)$ be the Minkowski sum of $e_i$ with the unit disc centered at the origin, i.e., 
\[ K_i = \{ a\in\reals^2 \mid \exists b\in e_i\, \|a-b\| \le 1 \} .\]
It is easily seen that a unit disc $D(q)$ intersects $e_i$ if and only if $q\in K_i$. See Figure~\ref{fig:Mink}.
We divide $\bd K_i$ at its leftmost and rightmost points $l_i, r_i$
into an \emph{upper arc} $\gamma_i^+$ and a \emph{lower arc} $\gamma_i^-$. 
We say that a point $x$ lies \emph{above} (resp. \emph{below}) an arc if the vertical ray emanating from $x$ in $+y$-direction (resp. $-y$-direction) intersects the arc. The following observation is straightforward:

\begin{lemma}
	\label{lem:mink}
	\begin{itemize}
		\item[(i)] The unit disc $D(q)$ intersects $e_i$ if and only if $q$ lies below $\gamma_i^+$ and above $\gamma_i^-$.
		\item[(ii)] Let $c$ be the center of the unit disc whose boundary contains $e_i$, and let $p$ and $q$ be the endpoints of $e_i$.
			The upper arc $\gamma_i^+$ is the upper envelope of the upper semi-circles of $D(p)$ and $D(q)$. 
			The lower arc $\gamma_i^-$ consists of portions of the lower semi-circles of $D(p)$, $D(q)$, and the disc 
			of radius $2$ centered at $c$.
	\end{itemize}
\end{lemma}

Set
$\Gamma^+ = \{\gamma_i^+ \mid e_i \in E_t\}$ and
$\Gamma^- = \{\gamma_i^- \mid e_i \in E_t\}$.
The following lemma states three crucial properties
of the arcs in $\Gamma^+$ and $\Gamma^-$.

\begin{lemma}
\label{lem:properties}
The arcs in $\Gamma^+$ and $\Gamma^-$ have the 
following properties.
\begin{description}
	\item[(P1)] Let $e_i, e_j \in E_t$ be two distinct edges with $i<j$. Then the lower arcs $\gamma_i^-$ and $\gamma_j^-$ 
		intersect and cross in exactly one point, and $\gamma_i$ (resp.\ $\gamma_j$) appears on $\cL(\{\gamma_i^-,\gamma_j^-\})$
		only before (resp.\ after) their intersection point.
    \item[(P2)] Let $e_i$, $e_j$, and $e_h$ be three 
    edges in $E_t$ with $i < j < h$, and let $q \in \R^2$. If $q$ lies 
    below $\gamma_i^+$ and $\gamma^+_h$, then $q$ also lies below $\gamma^+_j$. 
    Furthermore, the upper semi-circle of the disc centered at every endpoint of an edge in $E_t$ appears on the upper envelope $\cU(\Gamma^+)$, 
    and the order of arcs on $\cU(\Gamma^+)$ corresponds to the $x$-order of the endpoints of the edges of $E_t$.
    \item[(P3)] For every vertical line $\ell$, all intersection 
    points of $\ell$ with the arcs in $\Gamma^-$ lie below all 
    intersection points of $\ell$ with the arcs in $\Gamma^+$.
\end{description}
\end{lemma}
Lemma~\ref{lem:properties} is proved in Section~\ref{sec:properties}. In the remainder of this section, we describe the edge-intersection data structure and the query procedure assuming Lemma~\ref{lem:properties} holds.

The data structure $\Psi_C$ for $E_t$ consists of two parts:  $\Delta^+$ and $\Delta^-$ that
dynamically maintain the sets $\Gamma^+$ and $\Gamma^-$, respectively.
The purpose of $\Delta^+$ (resp. $\Delta^-$) is to efficiently answer 
the following query: given a point $q \in \R^2$, report the upper 
(resp.~lower) arcs that are above (resp.~below) $q$. 
Additionally, $\Delta^+$ has the property that it returns the answer incrementally,
one arc at a time on demand. Both $\Delta^+, \Delta^-$ support insertion/deletion of arcs.

\begin{figure}[htb]
        \centering
	\includegraphics{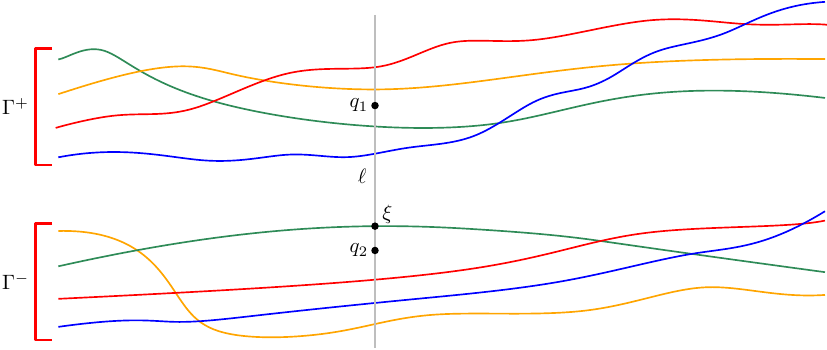}
    \caption{
    Illustration of the search procedure. There are four pairs 
    of upper and lower arcs (each pair has a distinct  color), and two query points $q_1$ and $q_2$ lying on a vertical line
    $\ell$; $\xi$ is the intersection point of $\ell$ with $\cU(\Gamma^-)$. The point $q_1$ (resp. $q_2$) lies above (resp.\ below)
	$\xi$, so it suffices to search in $\Gamma^+$ (resp.\ $\Gamma^-$).
}
    	\label{f:Section_3_Overview} 
\end{figure}

\paragraph{Answering an edge-intersection query.}
With $\Delta^+, \Delta^-$ at our disposal, the edges of $E_t$ that intersect a unit disc $D(q)$ are reported as follows. By Lemma~\ref{lem:mink}~(i), 
we wish to report the edges $e_i$ such that $\gamma_i^-$ lies below $q$ and $\gamma_i^+$ lies above $q$.  Here is the basic idea: 
Let $\ell$ be the vertical line passing through $q$. Assume, for the sake of exposition, that we know
the intersection point $\xi$ between $\ell$ and 
the upper envelope of $\Gamma^-$.
If $q$ lies above $\xi$ (e.g., $q_1$ in Figure~\ref{f:Section_3_Overview}), then
$q$ lies above all the lower arcs that cross
$\ell$. Therefore  it suffices to search  the structure
$\Delta^+$ to report the upper arcs $\gamma_i^+$ that lie above
$q$---in this case, $e_i$ intersects $D(q)$ if and only if $\gamma_i^+$ lies above $q$.
On the other hand, if the center $q$ coincides
with or lies below $\xi$ (e.g., $q_2$ in Figure~\ref{f:Section_3_Overview}), then by property (P3), $q$ lies below all upper arcs that $\ell$ intersects. We therefore search
$\Delta^-$ to report the lower arcs that lie
below $q$---in this case, $e_i$ intersects $D(q)$ if and only if $\gamma_i^-$ lies below $q$.

Unfortunately, we cannot easily maintain $\cU(\Gamma^-)$ and thus cannot compute the point $\xi$.
Hence, the query procedure is a bit more involved: 
We use $\Delta^+$ to return the upper arcs that lie above $q$ incrementally, one by one, on demand. 
For each upper arc $\gamma_i^+$ reported by $\Delta^+$, we check in $O(1)$ time whether 
$e_i$ intersects $D(q)$. If so, 
we add $e_i$ to the output list and ask $\Delta^+$ to report the next upper arc that lies above $q$. If all the upper arcs 
above $q$ turn out to be induced by edges of $E_t$ that intersect $D(q)$, 
we output this list of edges as the desired set $\edges_q$ and stop. Indeed,
if all the reported edges from the query of $\Delta^+$ intersect 
$D(q)$, then we can conclude that $q$ is above $\xi$ and this is 
the complete answer. 

On the other hand, if we detect that the edge $e_j\in E_t$ corresponding to an upper 
arc $\gamma_j^+$ reported by $\Delta^+$ does not intersect 
$D(q)$, then by Lemma~\ref{lem:mink}~(i), $q$ lies below $\gamma_i^-$ and thus below $\xi$.
As argued above, in this case,
we will obtain the full result by querying $\Delta^-$.  We note that an edge of $E_t$ intersecting $D(q)$ is reported at most twice.

As we will see below, if $|\edges_q| =k$ then searching in $\Delta^-, \Delta^+$ takes $O(\log n+k\log^2 n)$ and $O(\log n+k)$ time, respectively. Hence, the total query time is $O(\log n + k\log^2 n)$.
We now describe the data structures $\Delta^-$ and $\Delta^+$.

\paragraph{Maintaining the lower arcs.}

We extend each lower arc $\gamma_i^-$ into a bi-infinite curve $\overline{\gamma}_i^-$ by adding a ray of very large positive (resp. negative) slope in the $+y$-direction from its right (resp. left) endpoint; see Figure~\ref{fig:pseudo-line}. We choose the same slope for all rays and their  values are chosen infinitely large so that any query point $q$ lies below the extended curve if and only if it lies below the original lower arc $\gamma_i^-$. (We should regard this extension as symbolic in the sense that we do not choose any specific value of the slope of these rays.)
Let $\overline{\Gamma}^-$ denote the resulting set of bi-infinite curves.

\begin{figure}[htb]
	\includegraphics{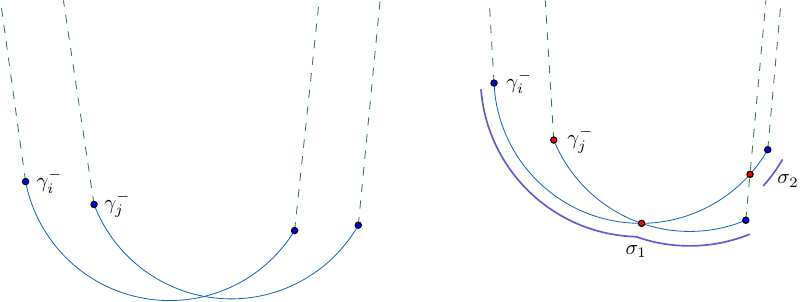}
	\caption{(a) Extending each lower arc to a bi-infinite curve. (b) Illustration of the proof of Lemma~\ref{lem:lower-arcs-PL}.}
	\label{fig:pseudo-line}
\end{figure}

\begin{lemma}
	\label{lem:lower-arcs-PL}
	$\overline{\Gamma}^-$ is a set of pseudo-lines. Furthermore, $\cL(\overline{\Gamma}^-)$ is the same as $\cL(\Gamma^-)$ except the two extension rays that appear as unbounded segments at each end of the lower envelope.
\end{lemma}
\begin{proof}
	Suppose there are two extended curves $\overline{\gamma}_i^-, \overline{\gamma}_j^- \in \overline{\Gamma}^-$,  with $i<j$, that intersect at two points 
	$\sigma_1, \sigma_2$. By (P1), one of these intersection points, say, $\sigma_1$, is the intersection point of 
	$\gamma_i^-$ and $\gamma_j^-$, and $\gamma_j^-$ appears below $\gamma_i^-$ to the right of $\sigma_1$. 
	Suppose $\sigma_2$ lies to the right of $\sigma_1$; see Figure~\ref{fig:pseudo-line}. 
	Then $\sigma_2$ is the intersection point of $\gamma_i^-$ and the extension ray from the right endpoint of $\gamma_j^-$. 
	But this would imply that the lower arc $\gamma_i^-$ extends beyond the right endpoint of the arc $\gamma_j^-$ and thus 
	$\gamma_i^-$ appears on $\cL(\{\gamma_i^-,\gamma_j^-\})$ to the right of $\sigma_1$, contradicting (P1). 

	A similar contradiction arises if $\sigma_2$ lies to the left of $\sigma_1$. Hence, $\overline{\gamma}_i^-, \overline{\gamma}_j^-$ 
	intersect in exactly one point and the two arcs cross at that point, which implies that $\overline{\Gamma}^-$ is a set of 
	pseudo-lines. 
	The second part of the lemma now follows from the fact that every pair of arcs in $\Gamma^-$ intersects.
\end{proof}

In view of Lemma~\ref{lem:lower-arcs-PL}, we can use the data structure
described in Section~\ref{sec:dynamic_lower_env} to report the lower arcs that lie below a query point $q\in\reals^2$.
Recall that this data structure uses linear space, answers a ray-intersection query in $O(\log n + k\log^2 n)$ time, where $k$ is the output size, 
and handles insertion/deletion of a curve in $O(\log^2 n)$ time.

\textbf{Remark.} After we insert a new unit disc, we update the set $E_t$, which leads to deleting or inserting many lower arcs.
In order to ensure that Property~(P1)
hold at all times, we first delete all the old lower arcs 
from $\Delta^-$ and then insert the new ones. 

\paragraph{Maintaining the upper arcs.}

Let $P = \langle p_1 <_x p_2 <_x \cdots <_x \dots <_x p_r\rangle$ 
be the sequence of endpoints of the edges of $E_t$.
To keep the structure simple, if two edges of $E_t$ meet at 
a single point, we keep only one copy of that point, but for each point $p_i$, we remember 
the upper arcs incident to $p_i$,
from left to right. 
For a point $p_i \in P$, let $\ucircle(p_i)$ denote the upper semi-circle of $D(p_i)$.
By Lemma~\ref{lem:mink}~(ii), $\cU(\Gamma^+)$ is the upper envelope of $\{ \ucircle(p_i) \mid 1 \le i\le r\}$, 
and by property (P2), each point in $P$ contributes an arc $s_i$ to $\U(\Gamma^+)$.

The arcs $\ucircle(p_i)$ can be extended to pseudo-lines (similar to lower arcs), and we can construct the pseudo-line data structure of 
Section~\ref{sec:dynamic_lower_env} to maintain $\cU(\Gamma^+)$. Here, we describe a simpler and slightly more efficient data structure.
The data structure $\Delta^+$ is a red-black tree.
The $i$th leftmost leaf stores the point $p_i$ of $P$. For a leaf $v$, we will use $p_v$ to denote the point stored at $v$.
Each leaf also stores pointers \texttt{rn} and \texttt{ln} to the 
right and the left neighboring leaf, respectively, if they exist. 
Each internal node stores a pointer \texttt{lml} to the leftmost leaf 
of its right subtree. Each subtree of $\Delta^+$ corresponds to a contiguous portion of $\cU(\Gamma^+)$.
For each $p_i$, we store the centers of the (at most two) unit discs whose boundaries contain $p_i$.

\textbf{\textit{Query.}}
Let $q$ be a query point, and let $\ell$ be the vertical line passing through $q$.  Recall that the structure $\Delta^+$ reports the upper arcs lying above a 
query point $q$ incrementally, one arc at a time. Here is the outline of the query procedure: We first find the arc $s_q$ of $\cU (\Gamma^+)$ intersected by $\ell$. If $q$ lies above $s_q$, then $q$ does not lie below any upper arc in $\Gamma^+$, and we stop. So assume that $q$ lies below $s_q$. 
Let $p_i\in P$ be the point such that $\ucircle(p_i)$ contains  $s_q$.
We report the upper arc(s) corresponding to $p_i$. Next, we  traverse the sequence $P$ 
starting at $p_i$, going both to the right and to the left, and reporting the upper arcs corresponding to these points until we find a point $p_j$ (in each direction) such that $q$ lies above $\ucircle(p_j)$. By (P2), if $\ucircle(p_j)$ for $j>i$ (resp. $j<i$) lies below $q$ then so does $\ucircle(p_h)$ for all $h>j$ (resp. $h<j$), so we can stop. We now describe how we perform these steps efficiently using $\Delta^+$. 

By following a path from the root, we first find the leaf $v$ storing the point $p_v$
such that the $s_q$ lies on $\ucircle(p_v)$. The search down the tree is carried out as follows:
Suppose we are at an internal node $u$. We compute the breakpoint $\sigma_u$ of $\cU(\Gamma^+)$ 
that separates the portions of $\cU(\Gamma^+)$ represented by the left and right subtrees of $u$:
We use the pointer \texttt{lml}($u$) to obtain $w$, the leftmost leaf in the right subtree of $u$. Using
\texttt{ln}($w$), we find the predecessor of $p_w$.
The breakpoint $\sigma_u$ is the intersection point of $\ucircle(p_w)$ and $\ucircle(p_{\texttt{ln}(w)})$.
If $x(\sigma_u) \le x(q)$, we visit the right child of $u$; otherwise we visit the left child of $u$.

Once we reach the leaf $v$ such that the arc $s_q$ of $\cU(\Gamma^+)$ lies on $\ucircle(p_v)$, we 
traverse the leaves of $\Delta^+$, starting at $v$ and 
going both to the right and to the left, say, we first go right and then left. 
Suppose we are going right and we are at a leaf $u$. When we are asked to report an upper arc, we test whether $q$ lies below 
$\ucircle(p_u)$.  If the answer is yes, then we report the (at most two) arcs of $\Gamma^+$ corresponding to $p_u$, and move to the next leaf in the right direction. If $u$ was the rightmost leaf or $q$ lies above $\ucircle(p_u)$, we start visiting left starting from $\texttt{ln}(v)$ and do the same as above. Once we visited the leftmost leaf $u$ or detected that $q$ lies above $\ucircle(p_u)$, then we stop and declare that all upper arcs of $\Gamma^+$ lying above $q$ have been reported.

The correctness of the query procedure follows from property (P2) of upper arcs.
If the procedure reported a total of $k$ arcs then the time taken by the procedure is $O(\log n + k)$.

\textbf{\textit{Update.}} An upper arc may be inserted into or deleted from $\Delta^+$ in $O(\log n)$ time by
simply removing the endpoints of the deleted arc and inserting the endpoints of the new arc into $\Delta^+$ and updating the auxiliary information stored at the node of $\Delta^+$.

Building similar data structures for $E_b, E_l$, and $E_r$ and putting everything together, we obtain the main result of this subsection.
\begin{lemma}
	\label{lem:edge-intersection} 
	The edges of $\bd U$ lying inside a grid cell of $\GG$ can be maintained in a linear-size data structure so 
	that all edges intersecting a unit disc can be reported in $O((k+1)\log^2 n)$ time, where $k$ is 
	the number of reported arcs. The data structure can be updated in $O(\log^2 n)$ time per 
	insertion/deletion.
\end{lemma}

\subsection{Proof of Lemma~\ref{lem:properties}}
\label{sec:properties}

We now prove properties (P1)--(P3) of upper and lower arcs stated in Lemma~\ref{lem:properties}.
We begin with Property~(P1).

\begin{figure}[ht]
\centering
\includegraphics{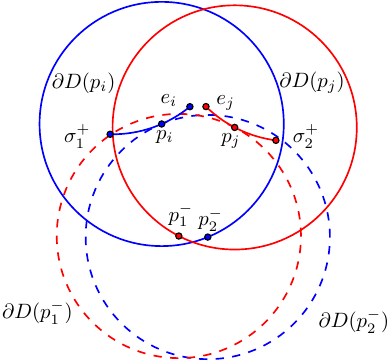}
\caption{Illustration of the proof that $\gamma_i^-$ and $\gamma^-_j$ 
intersect exactly once (see Lemma~\ref{l:lower_curves_intersect_once}).}
\label{f:lower_curves_intersect_once} 
\end{figure}

\begin{lemma}
\label{l:lower_curves_intersect_once}
Let $e_i, e_j \in E_t$ be two distinct edges with $i<j$. Then the lower arcs $\gamma_i^-$ 
	and $\gamma^-_j$ intersect in exactly one point and the two arcs cross at that point. Furthermore, $\gamma_i^-$ (resp.\ $\gamma_j^-$) appears on $\cL(\{\gamma_i^-,\gamma_j^-\})$ only before (resp. after) their intersection point.
\end{lemma}
\begin{proof}
First, we observe that if $e_i$ and $e_j$ have a common
endpoint $q$, then $q$ does not 
contribute to $\cL(\{\gamma_i^-, \gamma^-_j\})$, 
i.e., the unit disc $D(q)$ lies strictly 
above $\cL(\{\gamma_i^-, \gamma^-_j\})$. 
Indeed, assume for a contradiction that 
$\cL(\{\gamma_i^-, \gamma^-_j\})$ contains a point $r$
with distance $1$ from $q$.
Then, the unit disc $D(r)$ 
is tangent to $e_i$ and $e_j$ at the point $q$. 
However, this is impossible, since $e_i$ and $e_j$ belong to the 
lower semi-circles of two distinct unit discs.
    
Next, notice that the arcs $\gamma_i^-$ and $\gamma^-_j$ 
intersect at least once since any 
two points in the grid cell $C$ have distance at most $1$.
Suppose $p^-_1$ is a point on $\gamma_j^-$
and $p^-_2$ a point on $\gamma_i^-$, with 
$p^-_1 <_x p^-_2$;
refer to Figure~\ref{f:lower_curves_intersect_once}.
Assume for a contradiction that
both $p^-_1$ and $p^-_2$ appear on 
	$\cL(\{\gamma_i^-,  \gamma^-_j\})$.
Consider the upper semi-circle $\sigma_1^+$ of 
$D(p_1^-)$ and the 
upper semi-circle $\sigma_2^+$ of 
$D(p_2^-)$. The upper semi-circle $\sigma_1^+$ touches
$e_j$ in a point $p_j$, and the upper semi-circle $\sigma_2^+$
touches $\partial D(p_1^-)$ in a point $p_i$.
	Since $p_1, p_2$ lie on the lower semicircles of $\bd D(p_j), \bd D(p_i)$, respectively, and $\|p_i-p_j\| \le 1$, 
	the upper semi-circles $\sigma^+_1$ and $\sigma^+_2$ intersect 
exactly once, and since $p^-_1 <_x p^-_2$, the semi-circle $\sigma^+_1$ 
appears to the left of $\sigma^+_2$ on the upper envelope $\U$ of 
$\sigma^+_1$ and $\sigma^+_2$. 
The point $p_i$ must be on $\U$, since otherwise $p^-_1$ 
would be inside $D(p_i)$, which contradicts the fact that $p^-_1$ 
belongs to $\cL(\{\gamma_i^-,  \gamma^-_j\})$; 
and similarly for $p_j$.
This implies that $p_i \geq_x p_j$. 
Now, recall that the common endpoint of $e_i$ and $e_j$
(if it exists) does 
not contribute to $\cL(\{\gamma_i^-,  \gamma^-_j\})$,
so we actually  have $p_i >_x p_j$,
which contradicts the assumption $i < j$.
Thus, it follows that $\gamma_i^-$ and $\gamma_j^-$ intersect
	exactly once and that $\gamma_i^-$ appears before $\gamma_j^-$ 
	on $\cL(\{\gamma_i^-,  \gamma^-_j\})$. Furthermore, this argument also implies that $\gamma_i^-$ lies above 
	$\gamma_j^-$ after their intersection point and vice-versa. We can conclude that
	$\gamma_i^-$ (resp., $\gamma_j^-$) appears on $\cL(\{\gamma_i^-,  \gamma^-_j\})$
	only before (resp., after) their intersection point.
\end{proof}

\begin{figure}[htb]
        \centering
        \includegraphics{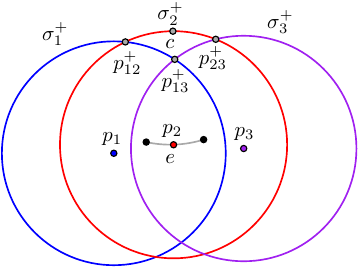}
    \caption{Illustration of the proof of Lemma~\ref{l:upper_curves_order}.}
    	\label{f:upper_curves_order}
\end{figure}

\begin{lemma}
Let $e_i$, $e_j$, and $e_h$, for $i < j < h$, be
three edges in $E_t$, and let $q \in \R^2$ be a point. 
If $q$ is below $\gamma_i^+$ and $\gamma^+_h$ 
then $q$ is also below $\gamma^+_j$.
Furthermore, for every endpoint $p$ of every edge of $E_t$, 
	the upper semi-circle of $D(p)$ appears on $\U(\Gamma^+)$. 
The $x$-order of the arcs on $\U(\Gamma^+)$ corresponds to the $x$-order 
of the endpoints of the edges of $E_t$. 
\label{l:upper_curves_order}
\end{lemma}

\begin{proof}
Let $p_1$, $p_2$, and $p_3$ be points on three distinct edges 
of $E_t$, with $p_1 <_x p_2 <_x p_3$; 
see Figure~\ref{f:upper_curves_order}.
Let $\sigma^+_1$, $\sigma^+_2$, and $\sigma^+_3$ be the upper 
semi-circles of $D(p_1)$, $D(p_2)$, 
and $D(p_3)$, respectively. 
Let $p^+_{12}$ and $p^+_{23}$ be the 
intersection points $\sigma^+_1 \cap \sigma^+_2$ and 
$\sigma^+_2 \cap \sigma^+_3$, respectively. Note that 
these intersection points exist, since the distance between 
any two points in the grid cell $C$ is at most 1. 
Since $p_1 <_x p_2$, we have that $\sigma^+_1$ 
appears to the left of $\sigma^+_2$ on $\U(\{\sigma^+_1,\sigma^+_2\})$. 
Let $c$ be the center of the circle containing the edge $e$ of $E_t$ that contains $p_2$. 
The point $c$ is on $\sigma^+_2$, since $p_2$ belongs to a 
lower semi-circle of radius 1. 
Moreover, $c$ is not below $\sigma^+_1$, since otherwise 
we would have that $p_1$ is in the interior of $D(c)$, 
contradicting the fact that $p_1$ 
lies on an edge of $E_t$. This means that $p^+_{12} \leq_x c$. 
The same argument implies that $p^+_{23} \geq_x c$ and therefore 
$p^+_{12} \leq_x p^+_{23}$. This in turn implies that the intersection 
point, $p^+_{13}$, between $\sigma^+_1$ and $\sigma^+_3$ is below or on  
$\sigma^+_2$ and therefore every point that lies below $\sigma^+_1$ and 
$\sigma^+_3$ also lies below $\sigma^+_2$. 

The lemma readily follows from the above observations.
\end{proof}

Next, we show that for any two distinct edges 
$e_i, e_j\in E_t$,  the upper arc $\gamma_i^+$ and the
lower arc $\gamma^-_j$ are 
disjoint. Furthermore, we show that $\gamma_i^+$ is above $\gamma^-_j$, 
hence proving Property~(P3).

\begin{figure}[htb]
\centering
\includegraphics{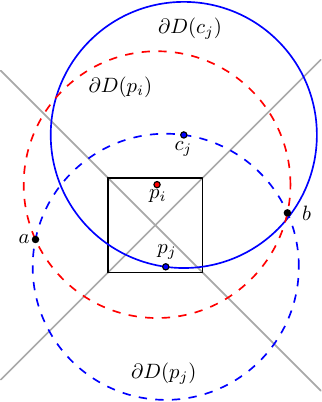}
\caption{Illustration of the proof that 
$\gamma_i^-$ and $\gamma^+_j$ do 
not intersect (Lemma~\ref{lem:upper_above_lower}).}
\label{f:upper_lower_curves_not_intersect}
\end{figure}

\begin{lemma}
\label{lem:upper_above_lower}
Let $e_i$ and $e_j$ be two distinct edges in $E_t$, and let 
$\ell$ be a vertical line. If $\ell$ intersects with $\gamma_i^+$
and $\gamma^-_j$ in the points $p$ and $q$, respectively, 
then, $p >_y q$. 
\end{lemma}

\begin{proof}
First, we show that $\gamma_i^-$ and $\gamma^+_j$ do 
not intersect. Suppose for a contradiction that 
$\gamma_i^-$ and $\gamma^+_j$ intersect at a point $a$. 
This means that $a$ is one of the intersection points of 
$\partial D(p_i)$ and $\partial D(p_j)$, where $p_i \in e_i$ and 
$p_j \in e_j$. Then $a \leq_y p_i$ and
$a \geq_y p_j$,
since $a$ lies on $\gamma_i^-$ and on $\gamma^+_j$. 
The same argument applies to the second intersection point, $b$, 
between $\partial D(p_i)$ and $\partial D(p_j)$. Assume that $a <_x b$, 
which implies that $a$ and $b$ belong to $Q_l$ and $Q_r$, 
respectively. Let $c_j$ be the center point of $e_j$. 
The point $c_j$ lies on the upper semi-circle of $D(p_j)$, and 
it belongs to $Q_t$ which means that $c_j \in D(p_i)$. This means 
that $p_i \in D(c_j)$ which contradicts the fact that $e_i$ 
belongs to $E_t$; see Figure~\ref{f:upper_lower_curves_not_intersect}.
    
Since $\gamma_i^+$ and $\gamma^-_j$ do not intersect, it must be that
that in the common $x$-interval of $\gamma_i^+$ and $\gamma^-_j$,
the arc $\gamma_i^+$ is strictly above or below $\gamma^-_j$. 
The edge $e_i$ is above $\gamma^-_j$ and therefore $\gamma_i^+$ is 
above $\gamma^-_j$.
\end{proof}

\section{Intersection Searching of Unit Arcs with a Unit Disc}
\label{sec:range-search}

In this section, we address the following 
intersection-searching problem: Preprocess a collection 
$\C$ of circular arcs of unit radius into a data structure so that for a 
query point $x \in \R^2$, 
the arcs in $\C$ intersecting the unit disc $D(x)$ can be reported efficiently.
By splitting each arc of $\C$ into at most three arcs, we can ensure that each arc of $\C$ lies in the lower or upper semi-circle of its supporting disc.
We assume for simplicity that every arc in $\C$ belongs 
to the lower semi-circle of its supporting disc. A similar data structure can be constructed for arcs lying in upper semi-circles.

Let $e \in \C$ be a unit-radius circular arc whose center is at $c$, and let $p_1$ and $p_2$ 
be its endpoints. A unit disc $D(x)$ intersects $e$ if and only 
if $e \oplus D(0)$, the Minkowski sum of $e$ with a unit disc at
the origin, contains the center $x$.
Let $z = D(p_1) \cup D(p_2)$, and let $D^+(c)$ be the disc of radius 
$2$ centered at $c$;
$z$ divides $D^+(c)$ into three regions 
(see Figure~\ref{f:zones_and_line_intersect_q}):
(i) $z^+$, the portion of $D^+(c)\setminus z$ above $z$, 
(ii) $z$ itself, and (iii) $z^-$, the portion of  
$D^+(c)\setminus z$ below $z$. 
It can be verified that $e \oplus D(0) = z \cup z^-$. 
We give an alternate characterization of $z\cup z^-$, 
which will help in developing the data structure.

Let $\ell$ be a line that passes through the tangent 
points, $p'_1$ and $p'_2$, of $D(p_1)$ and $D(p_2)$ 
with $D^+(c)$, respectively, and let $\ell^-$ be the 
halfplane below $\ell$. Set $L(e) = D^+(c)\cap \ell^-$.

\begin{lemma}
\label{l:line_intersect_q}
If $\partial D(p_1)$ and $\partial D(p_2)$ intersect at 
two points (one of which is always $c$),
then $\ell$ passes through 
$q = (\partial D(p_1)\cap \partial D(p_2)) \setminus \{c\}$. 
Otherwise, $c \in \ell$.
\end{lemma}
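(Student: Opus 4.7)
The plan is to set up coordinates with $c$ placed at the origin, so that by assumption $p_1$ and $p_2$ are unit vectors. I will then compute each of the three geometric objects involved ($p'_1$, $p'_2$, and $q$) as explicit expressions in $p_1$ and $p_2$, and verify collinearity by inspection.

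First I would compute $p'_1$ and $p'_2$. Since $D(p_i)$ has radius $1$ and its center $p_i$ lies at distance $1$ from $c$, the unit circle $\partial D(p_i)$ is internally tangent to the radius-$2$ circle $\partial D^+(c)$, and the tangency point lies on the ray from $c$ through $p_i$ at distance $2$ from $c$. Hence $p'_i = 2p_i$ in our coordinates.

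Next I would identify $q$. Both circles $\partial D(p_1)$ and $\partial D(p_2)$ have the same radius, so their intersection set (when nondegenerate) consists of two points symmetric about the line through $p_1$ and $p_2$. We are told that $c = 0$ is one of them, so $q$ is the reflection of the origin across the chord $p_1p_2$. Because $|cp_1| = |cp_2|$, the foot of this perpendicular is precisely the midpoint $m = (p_1+p_2)/2$, so $q = 2m = p_1+p_2$. The key observation is then immediate: the line $\ell$ joining $p'_1 = 2p_1$ and $p'_2 = 2p_2$ contains their midpoint $(p'_1+p'_2)/2 = p_1+p_2 = q$, proving the first claim.

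For the degenerate case I would note that the two unit circles already share the point $c$, so they fail to meet in two points precisely when they are tangent at $c$. Tangency at $c$ means the radii $cp_1$ and $cp_2$ are antiparallel, i.e.\ $p_2 = -p_1$. But then $p'_2 = -p'_1$, and $\ell$ is the line through $2p_1$ and $-2p_1$, which passes through the origin $c$, establishing $c \in \ell$. I do not expect a real obstacle here: the whole argument is a short coordinate computation, and the only thing to be careful about is the sign convention that places $p'_i$ on the far side of $c$ from the origin, i.e.\ recognising that the tangency between the unit disc $D(p_i)$ and the radius-$2$ disc $D^+(c)$ is \emph{internal}, so $p'_i = c + 2(p_i - c)$ rather than the external tangency formula.
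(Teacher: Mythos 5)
Your proof is correct, but it takes a genuinely different route from the paper's. The paper argues synthetically: it uses that $(c,p'_1)$ and $(c,p'_2)$ are diameters of $D(p_1)$ and $D(p_2)$, views $(c,p_1,q,p_2)$ as a rhombus, and chases angles through the isosceles triangles $\triangle p_1qp'_1$ and $\triangle p_2qp'_2$ to conclude $\angle p'_1qp'_2=\pi$, with a separate argument (coinciding segments) for the degenerate case. You instead compute in vector form with $c$ at the origin: internal tangency gives $p'_i=2p_i$; the symmetry of the two intersection points of equal-radius circles about the line of centers, together with $|cp_1|=|cp_2|=1$, gives $q=p_1+p_2$; and collinearity is immediate because $q$ is then the midpoint of $p'_1p'_2$. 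Your version is shorter, avoids the angle bookkeeping, in fact proves the slightly stronger statement that $q$ bisects the segment $p'_1p'_2$, and handles the degenerate case ($p_2=-p_1$, so $\ell$ passes through the origin $c$) with the same formulas; the paper's angle-chasing argument, by contrast, stays coordinate-free. The one point requiring care is exactly the one you flag: the tangency of $D(p_i)$ with $D^+(c)$ is internal, so $p'_i=c+2(p_i-c)$, and your verification $|2p_i-p_i|=1$ settles it.
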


\begin{proof}
Assume that $q$ exists. The quadrilateral 
$(c, p_1, q, p_2)$ is a rhombus, since all its edges have length $1$. 
Let $\alpha$ be the angle $\angle p_1qp_2$ and $\beta$ be the 
angle $\angle cp_1q$.  The angle $\angle qp_1p'_1$ is equal to $\alpha$, 
since the segment $(c,p'_1)$ is a diameter of $D(p_1)$. 
The angle $\angle p_1qp'_1$ is equal to $\frac{\beta}{2}$, 
since $\triangle p_1qp'_1$ is an isosceles triangle. 
The same arguments apply to the angle $\angle p_2qp'_2$, 
implying that the angle $\angle p'_1qp'_2$ is equal to $\pi$.

Assume now that $q$ does not exist. Then the segment $(p_1,p_2)$ 
is a diameter of $D(c)$. The segment $(c,p'_1)$ is a 
diameter of $D(p_1)$. The segment $(p_1,p_2)$ coincides with 
$(c,p'_1)$  at the segment $(c,p_1)$. 
The same argument applies to the segment $(c,p'_2)$, implying 
that the angle $\angle p'_1qp'_2$ is equal to $\pi$.
\end{proof}

\begin{figure}[ht]
	\centering
	 \begin{subfigure}{0.4\textwidth}
        \includegraphics{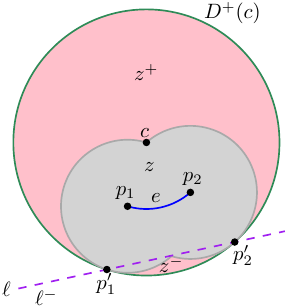}
        \captionsetup{justification=centering}
    	\label{f:zones}
    \end{subfigure}
    \hspace{3em}
    \begin{subfigure}{0.4\textwidth}
        \includegraphics{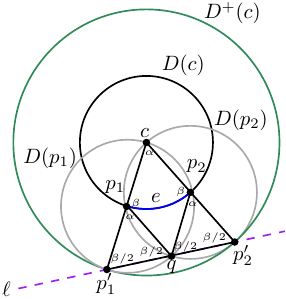}
        \captionsetup{justification=centering}
    	\label{f:line_intersect_q} 
	\end{subfigure}
    \caption{(On the left) Partition of $D^+(c)$ into three regions: 
    $z^+$, $z$ and $z^-$. (On the right) Illustration 
    of Lemma~\ref{l:line_intersect_q}.}
    \label{f:zones_and_line_intersect_q}
\end{figure}

The following corollary summarizes the criteria 
for the intersection of a unit circular arc with a unit disc.

\begin{corollary}
\label{c:criteria_unit_arc}
Let $e$ be a circular arc in $\C$ with endpoints $p_1$ and $p_2$ 
and center $c$. Then
$z\cup z^- = z \cup L(e)$. Furthermore,
$e$ intersects a unit disc $D(x)$ if and only 
if at least one of the following conditions 
is satisfied: (i) $x \in D(p_1)$ (or $p_1 \in D(x)$),  
(ii) $x \in D(p_2)$ (or $p_2 \in D(x)$), and 
(iii) $x \in L(e)$.
\end{corollary}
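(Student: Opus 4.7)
The plan is to prove part~(i) first, then derive part~(ii) as a short consequence using the already-stated identity $e \oplus D(0) = z \cup z^-$. The heart of (i) is a geometric fact: the chord $\ell$ separates $D^+(c) \setminus z$ into the two parts $z^+$ and $z^-$, with $z^+$ on the side of $\ell$ containing $c$ and $z^-$ on the opposite side. Once this separation is established, $L(e) \setminus z = (D^+(c) \cap \ell^-) \setminus z = z^-$, so $z \cup L(e) = z \cup z^-$ as claimed.

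To set up the separation, Lemma~\ref{l:line_intersect_q} tells us that $\ell$ passes through $p_1'$, $p_2'$, and $q$ (or through $c$ in the degenerate case when $q$ does not exist). Using $p_i' = 2p_i - c$ and, when $q$ exists, the rhombus identity $q = p_1 + p_2 - c$, a short signed-distance computation (place $c$ at the origin, so that $\ell$ is parametrized as $\{(2-t)p_1 + t p_2 : t \in \R\}$) shows that $c$, $p_1$, and $p_2$ all lie strictly on the same side of $\ell$; call this side $\ell^+$.

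The main step is then to verify that each of the three arcs comprising $\partial z^-$ lies in $\ell^-$. Concretely, (a)~the chord $\overline{p_1' q}$ is contained in $\ell$ and cuts $D(p_1)$ into two circular segments; the arc of $\partial D(p_1)$ bounding $z^-$ is the minor arc, which lies on the side of $\ell$ opposite from the center $p_1$, hence in $\ell^-$. (b)~Symmetrically for the arc of $\partial D(p_2)$ from $q$ to $p_2'$. (c)~The chord $\overline{p_1' p_2'}$ also lies in $\ell$, and the boundary arc of $z^-$ along $\partial D^+(c)$ is the minor arc of $\partial D^+(c)$ cut off by this chord, lying on the side opposite from $c$, hence in $\ell^-$. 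Since $z^-$ is enclosed by these three arcs inside $D^+(c)$, we conclude $z^- \subseteq \ell^-$. The analogous argument, applied instead to the major arcs on the $c$-side, gives $z^+ \subseteq \ell^+$, completing~(i). I expect the main technical nuisance to be making ``minor versus major arc'' precise in the degenerate case where $q$ does not exist and $\ell$ passes through $c$; however, the same chord-and-center argument still works, with the chords of $D(p_1)$ and $D(p_2)$ becoming diameters through $c$.

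For part~(ii), combining~(i) with $e \oplus D(0) = z \cup z^-$ yields $e \oplus D(0) = z \cup L(e) = D(p_1) \cup D(p_2) \cup L(e)$. Since $D(0)$ is symmetric about the origin, $e \cap D(x) \neq \emptyset$ iff $x \in e \oplus D(0)$; and $x \in D(p_i) \iff p_i \in D(x)$. The union decomposition then translates directly into the three alternative conditions (a), (b), (c), completing the proof.
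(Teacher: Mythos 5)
Your proof is correct and takes essentially the route the paper intends: the paper states this corollary without a written proof, as a direct consequence of Lemma~\ref{l:line_intersect_q} and the asserted identity $e\oplus D(0)=z\cup z^-$, and your argument simply makes explicit the separation step that the paper leaves implicit (namely that $\ell$, passing through $p_1'$, $q$ (or $c$), and $p_2'$, has the three boundary arcs of $z^-$ on its lower side and those of $z^+$ on its upper side, whence $z^-=L(e)\setminus z$ up to boundary). I see no gap, including in the degenerate case where $q$ does not exist and the chords become diameters through $c$.
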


We thus construct three separate data structures. 
The first data structure preprocesses the left endpoints of
the arcs in $\C$ for unit-disc range searching, the 
second data structure preprocesses the right endpoints of arcs 
in $\C$ for unit-disc range searching, and the third data structure 
preprocesses $\cL = \{L(e) \mid e \in \C\}$ for inverse 
range searching, i.e., reporting all regions in $\cL$ that 
contain a query point. Using standard disk range-searching data 
structures (see e.g.~\cite{Agarwal1994,Agarwal2017}), we can build 
these three data structures so that each of them takes $O(n)$ space
and answers a query in $O(n^{1/2+\eps}+k)$ time, where $k$ is the 
output size. Furthermore, these data structures can handle 
insertions/deletions in $O(\log^2 n)$ time using the lazy-partial-rebuilding technique~\cite{Over83}. We 
conclude the following:

\begin{theorem}
Let $\C$ be a set of $n$ unit-circle arcs in $\R^2$. Then,
$\C$ can be preprocessed into a data structure of linear 
size so that for a query unit disc $D$, all arcs of $\C$ 
intersecting $D$ can be reported in 
$O(n^{1/2+\eps} + k)$ time,
where $\eps$ is an arbitrarily small constant and $k$ is the output size.
Furthermore the data structure can be updated under 
insertion/deletion of a unit-circle arc
in $O(\log^2 n)$ amortized time. 
\end{theorem}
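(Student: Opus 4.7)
The plan is to invoke Corollary~\ref{c:criteria_unit_arc} to split the intersection-searching query into three independent subqueries: for a query center $x$, an arc $e\in\C$ with endpoints $p_1,p_2$ intersects $D(x)$ iff $p_1\in D(x)$, or $p_2\in D(x)$, or $x\in L(e)$. I would build three separate data structures, query each on the same $x$, and take the union of their outputs, removing duplicates by arc identifier in $O(k)$ additional time. Each of the three substructures will be shown to meet the claimed bounds individually, from which the theorem follows immediately.

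For the first two subqueries I would store the left (respectively right) endpoints of all arcs and perform unit-disc range reporting in the plane. Via the standard lifting $(a,b)\mapsto (a,b,a^2+b^2)$, a unit-disc query becomes halfspace reporting in $\R^3$, for which Matou\v{s}ek-style simplex partition trees give $O(n)$ space and $O(n^{1/2+\epsilon}+k)$ query time; a logarithmic-method dynamization then yields $O(\log^2 n)$ amortized update time. These are precisely the bounds of the classical circle range-searching machinery cited at this point of the paper.

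The third subquery is the only one that requires structural work. Here I need, given $x$, to report every $e\in\C$ with $x\in L(e)=D^+(c_e)\cap\ell_e^-$. This membership decomposes into two constant-description-complexity predicates on $e$: (P1) $c_e\in D^+(x)$, a disc-range predicate on the centers, and (P2) $x\in\ell_e^-$, which under point-line duality becomes a halfspace predicate on the duals of the lines $\ell_e$. I would build a two-level partition tree whose outer level stores the (lifted) centers $c_e$ and, on query $x$, produces $O(n^{1/2+\epsilon})$ canonical subsets whose union equals the set of arcs satisfying (P1); each canonical subset is in turn stored in a secondary partition tree on the duals of the lines $\ell_e$ that filters by (P2). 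Standard multi-level analysis then gives $O(n)$ total space, $O(n^{1/2+\epsilon}+k)$ query time, and $O(\log^2 n)$ amortized update time.

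The step I expect to be the main obstacle is confirming that the dynamic two-level partition tree for the $L(e)$ subquery really achieves linear space together with $O(\log^2 n)$ amortized updates, since naive combinations of multi-level partition trees with the logarithmic method can inflate either the storage or the update cost. This is handled by the known dynamization of Matou\v{s}ek's partition trees for semi-algebraic sets of bounded description complexity, which is exactly the framework of the references~\cite{Agarwal1994,Agarwal2017} cited in the section, and I would invoke those results as a black box rather than redevelop them.
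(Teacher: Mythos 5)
Your proposal follows essentially the same route as the paper: split the query via Corollary~\ref{c:criteria_unit_arc} into the two endpoint-in-disc tests and the containment test $x\in L(e)$, build unit-disc range-reporting structures on the two sets of endpoints and a (multi-level) point-enclosure structure for the regions $L(e)$, and obtain linear space, $O(n^{1/2+\epsilon}+k)$ query and $O(\log^2 n)$ amortized update from the standard dynamic semialgebraic range-searching machinery, which the paper likewise invokes as a black box without further detail. One caution about the extra detail you supply for the third structure: the outer level should query the centers $c_e$ directly in the plane with the radius-$2$ disc as a semialgebraic range (this is what the cited machinery provides, giving $O(n^{1/2+\epsilon})$ canonical subsets with linear space), since lifting to halfspaces in $\mathbb{R}^3$ and using linear-space partition trees there only guarantees $O(n^{2/3+\epsilon})$ canonical subsets for the multi-level decomposition, which would spoil the claimed query bound even though it is harmless for the purely reporting endpoint structures.
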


\section{Conclusion}
\label{sec:concl}

In this paper, we presented linear-size dynamic data structures for 
maintaining the union of unit discs under insertions and for maintaining the lower envelope of pseudo-lines in the plane under both insertions and deletions. We also presented a linear-size structure for storing a  set of circular arcs of unit radius 
(not necessarily on the boundary of the union of the corresponding discs), so that all input arcs intersecting a query unit 
disc can be reported quickly. We conclude by mentioning a few open problems:
\begin{itemize}
	\item[(i)] Can the boundary of the union of unit discs be maintained in an output-sensitive manner when we allow both insertions and deletions of discs? The challenge in extending our data structure to handle the deletion of a unit disc $D$ is to quickly report the 
		new edges of $\U$ that lie in the interior of $D$.
	\item[(ii)] Can our data structure be extended to maintain the boundary of the union of discs of arbitrary radii? Although the union boundary still has linear size, many of the structure properties of the union used by our data structure no longer hold, e.g., two upper (or lower) arcs may intersect at two points and they cannot be treated as pseudo-lines.
	\item[(iii)] Can the area of the union of unit discs be maintained under insertions and deletions in sublinear time per update? 
		As mentioned in the introduction, it is not clear how to extend 
		Chan's data structure~\cite{Chan20a} for maintaining the volume of the convex hull of a 
		set of points in $\R^3$ to our setting.
\end{itemize}

\paragraph*{Acknowledgments.}
We thank Haim Kaplan and Micha Sharir for helpful discussions, and reviewers for their useful comments.

\bibliographystyle{abbrv}
\bibliography{sample-base}

\section*{Appendix: An example run of the algorithm in Section~\ref{sec:intersection-point}}
In this appendix we illustrate the algorithm, described in Section~\ref{sec:intersection-point}, for computing the 
intersection point of the lower envelopes of two sets of pseudo-lines such that all lines in one set are smaller than all in the other set.

\begin{figure}[htb]
    \centering
        \centering
	\includegraphics[width=0.6\textwidth]{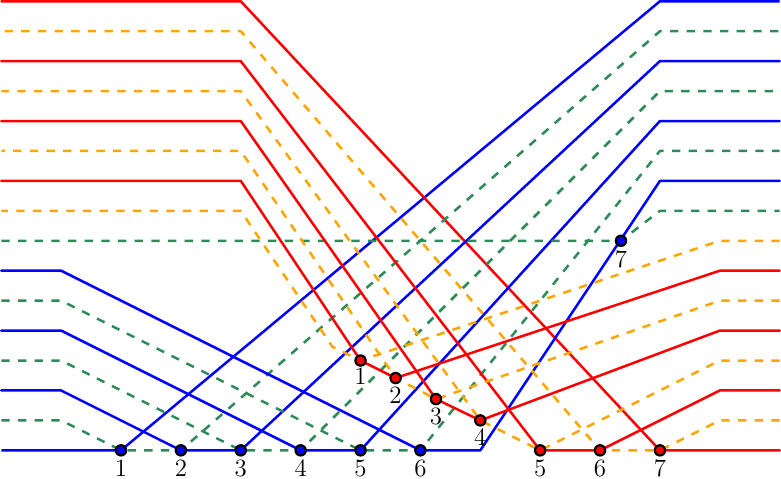}
        \caption{Two sets of pseudo-lines and their lower envelopes: 
	(i) the blue and green pseudo-lines, 
	(ii) the red and orange pseudo-lines. The blue and the 
	red dots represent the vertices on the lower envelopes.}
    	\label{f:intersection_point_demo2} 
\end{figure}

\begin{figure}[htb]
        \centering
        \includegraphics[width=0.8\textwidth]{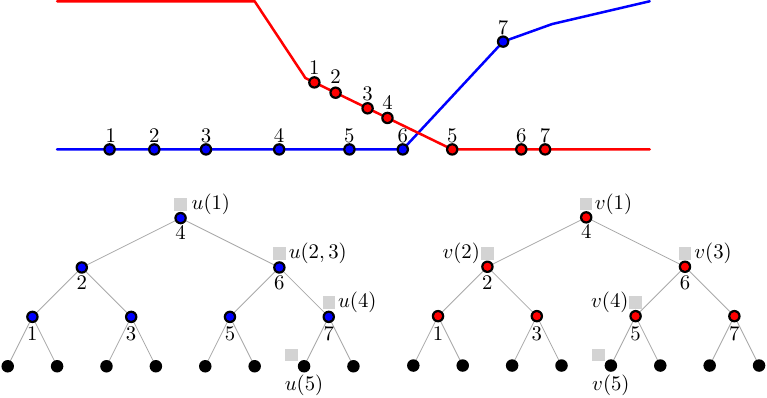}
        \caption{Top: the two lower envelopes $\cL_l$ and
	$\cL_r$ for 
	the pseudo-lines in Figure~\ref{f:intersection_point_demo2}. 
	Bottom: the corresponding trees $\Lambda_l$ and $\Lambda_r$.
	The labels $u(i)$ and $v(i)$ indicate the position 
	of the pointers $u$ and $v$ at step $i$, during the search.}
    	\label{f:intersection_point_demo} 
\end{figure}

\begin{table}[htb]
   \centering
    \begin{tabular}{|c|cc|cc|c|}
    \hline 
        Step  & $u$ & $v$ & uStack & vStack & Procedure case \\\hline
        1  & 4 & 4 & $\emptyset$ & $\emptyset$ & Case~3 \\
        2  & 6 & 2 & 4 & 4 & Case~2 $\rightarrow$ Case~2 \\
        3  & 6 & 6 & 4 & $\emptyset$ & Case~3 \\
        4  & 7 & 5 & 4, 6 & 6 & Case~1 $\rightarrow$ Case~3 \\
        5  & 7* & 5* & 4, 6 & 6, 5 & Case~3 $\rightarrow$ End\\\hline
    \end{tabular}
    \caption{The progress of the search for the example in
    Figure~\ref{f:intersection_point_demo}.}
    \label{fig:algodemo}
\end{table}

\bigskip
\end{document}